\tikzset{
	block/.style = {draw, rectangle,
		minimum height=1cm,
		minimum width=2cm},
	input/.style = {coordinate,node distance=1cm},
	output/.style = {coordinate,node distance=4cm},
	arrow/.style={draw, -latex,node distance=2cm},
	pinstyle/.style = {pin edge={latex-, black,node distance=2cm}},
	sum/.style = {draw, circle, node distance=1cm},
}
\newcommand{\INPUT}{\item[\textbf{Input:}]}
\newcommand{\OUTPUT}{\item[\textbf{Output:}]}
\newtheorem{theorem}{Theorem}
\begin{document}

\title{A Data-driven Predictive Control Architecture for Train Thermal Energy Management}

\author{
Ahmed Aboudonia, Johannes Estermann, Keith Moffat, Manfred Morari, John Lygeros  
\thanks{*This work was supported by the Swiss National Science Foundation under NCCR Automation (Corresponding Author: Ahmed Aboudonia)}
\thanks{Ahmed Aboudonia, Keith Moffat, Manfred Morari and John Lygeros are with the Automatic Control Laboratory, Department of Information Technology and Electrical Engineering, ETH Zurich, 8092 Zurich, Switzerland (emails: {\tt\small \{ahmedab,kmoffat,morari,lygeros\} @ethz.ch}). Johannes Estermann is with the Swiss Federal Railways (email: {\tt\small \{johannes.estermann\} @sbb.ch})}%
}

\markboth{Journal of \LaTeX\ Class Files,~Vol.~14, No.~8, August~2021}%
{Shell \MakeLowercase{\textit{et al.}}: A Sample Article Using IEEEtran.cls for IEEE Journals}


\maketitle

\begin{abstract}
We aim to improve the energy efficiency of train climate control architectures, with a focus on a specific class of regional trains operating throughout Switzerland, especially in Zurich and Geneva. Heating, Ventilation, and Air Conditioning (HVAC) systems represent the second largest energy consumer in these trains after traction. The current architecture comprises a high-level rule-based controller and a low-level tracking controller. To improve train energy efficiency, we propose adding a middle data-driven predictive control layer aimed at minimizing HVAC energy consumption while maintaining passenger comfort. The scheme incorporates a multistep prediction model developed using real-world data collected from a limited number of train coaches. To validate the effectiveness of the proposed architecture, we conduct multiple experiments on a separate set of train coaches; our results suggest energy savings between $10\%$ and $35\%$ with respect to the current architecture.
\end{abstract}

\begin{IEEEkeywords}
Predictive Control, Optimization, Energy Management
\end{IEEEkeywords}

\section{Introduction}
\label{sec:I}

\IEEEPARstart{C}{limate} considerations have motivated an extensive research effort on reducing energy consumption and improving energy efficiency of various systems such as building automation \cite{afram2014theory}, industrial processes \cite{abdelaziz2011review} and the railway  sector \cite{wang2011survey}.
In the railway sector, the majority of research has mainly focused on traction energy management. As pointed out in \cite{novak2018hierarchical}, different approaches have been developed to reduce the energy consumption of individual trains, ranging from speed trajectory optimization techniques \cite{lu2013single} and fuzzy predictive control \cite{bai2014energy} to pseudospectral methods and mixed integer linear programming \cite{wang2016optimal}. Additionally, the development of cooperative scheduling approaches \cite{yang2012cooperative,yang2015energy} and the installation of energy storage systems \cite{zhu2017hierarchical,clerici2018multiport} have contributed to improving the overall energy efficiency by making use of the regenerative braking energy.

Although traction is the main energy consumer in trains, it is not the only significant one. For example, the heating, ventilation, and air conditioning (HVAC) systems are the second-largest energy consumers in the trains of the Swiss Federal Railways (SBB), accounting for approximately $20\%$ to $40\%$ of the total energy consumption \cite{vetterli2015energy}.
Therefore, improving the energy efficiency of HVAC systems might lead to significant monetary and environmental benefits. 
Optimal control presents a promising direction for reducing train HVAC energy consumption.

Here we consider the Regio-Dosto train fleet operated by the Swiss Federal Railways Company (SBB) for short distance journeys especially in the Zurich and Geneva areas. A two-level control architecture is currently used to regulate the coach temperature of Regio-Dosto trains, with a high-level rule-based controller selecting an appropriate temperature setpoint and a low level controller tracking it. The driver of the train can intervene between the two layers and modify the desired setpoint by up to $\pm 2 \degree C$ to accommodate requests from the passengers. To improve the energy efficiency of the existing architecture while keeping the train temperature within prespecified acceptable bounds, we propose to add a mid-level predictive control layer that operates at the same level as the driver.

Model predictive control (MPC) is an optimal control scheme that computes control actions by repeatedly solving an optimization problem \cite{rawlings2017model,kouvaritakis2016model,lee2011model}. MPC has been used to control a wide range of complex systems thanks to its ability to systematically consider optimization requirements (in our case, minimizing energy consumption), incorporate forecast information (e.g. weather forecasts) and integrate state and input constraints (e.g. temperature constraints).

The primary difficulty in designing an effective MPC scheme lies in the necessity of having a model which might be difficult to obtain or complex to use. 
Therefore, significant efforts have focused on developing Data-driven Predictive Control (DDPC) schemes where either grey-box or black-box models are obtained using historical measurement data. 
In the energy management domain, robust MPC \cite{maasoumy2014handling,maasoumy2012optimal,gao2023energy}, stochastic MPC \cite{long2014scenario,zhang2014sample,ma2014stochastic,oldewurtel2013stochastic}, adaptive MPC \cite{herzog2013self,schmelas2017savings,tanaskovicrobust}, and distributed MPC \cite{lauro2014adaptive}, have been extensively explored for building climate control applications and have experimentally demonstrated success in improving performance. DDPC schemes have also been proposed in this sector, based on various machine learning algorithms (e.g. Gaussian processes \cite{maddalena2022experimental}, neural networks \cite{bunning2021input}, random forests \cite{bunning2020experimental}) as well as data-driven tools (e.g. behavioral system theory \cite{di2022lessons}, physics-informed autoregressive moving average models \cite{bunning2022physics}) and other statistical methods \cite{aswani2013provably}. These schemes have been employed in various settings and locations, including Hospitals \cite{maddalena2022experimental}, university facilities \cite{aswani2013provably,di2022lessons} and residential units \cite{bunning2020experimental,bunning2021input,bunning2022physics}.

MPC and DDPC approaches are less common for HVAC systems in the railway sector. Though superficially similar to building HVAC systems, train HVAC systems pose their own challenges due to rapidly changing disturbances, including occupancy, speed, direction, and the resulting variations in ambient temperature and solar radiation. \color{black} An MPC scheme was developed for tram climate control and tested in a climatic wind tunnel in \cite{hofstadter2018energy}. Other control techniques, such as sliding mode control \cite{shah2021advanced}, neural network-based control \cite{lepore2022neural}, passenger-centric control \cite{buonomano2024energy} and model-based optimal control \cite{kumar2010design} have also been proposed though successful validation in experiments and uptake remains limited.

To improve the energy efficiency of the existing control architecture in the Regio-Dosto trains, we propose adding a DDPC layer that decides the optimal setpoint taking both energy efficiency and passenger comfort into account. In this layer, a data-driven multistep prediction model is used and derived based on the Transient Predictor \cite{moffat2024transient} that minimizes the 2-norm of the training data, single-step prediction error for each of the future timesteps in a computationally efficient manner using the LQ Decomposition. 
We show that the Transient Predictor-based DDPC layer can be designed for the whole fleet using a small amount of data collected from a few coaches in different parts of Zurich.

The proposed layer is not meant to replace but complement the existing control architecture. More specifically, this layer receives the original setpoint from the rule-based controller and sends a modified setpoint to the tracking controller.
The rule-based controller remains in place, because its original setpoint determines the temperature bounds used inside the proposed layer.
Likewise, the tracking controller embedded inside the HVAC system manufactured by a third party is still used to control all underlying processes and functions of the HVAC system. We demonstrate the impact of the proposed architecture on experiments on several Regio Dosto trains of the SBB fleet. Though the coaches are unoccupied for safety considerations, to the best of the author's knowledge, this is the first deployment of a DDPC architecture on real world trains.

The contributions of this work are summarized as follows:
\begin{enumerate}

    \item Learning the thermal dynamics of train coaches using a limited amount of data collected from a few trains in Zurich.

    \item Designing the DDPC layer to determine the optimal energy-efficient temperature setpoint online.

    \item Evaluating the closed-loop performance and energy efficiency of the developed control architecture by conducting multiple experiments on various trains in different regions in Zurich.

\end{enumerate}

The paper is organized as follows. 
In Section \ref{sec:II}, we introduce the SBB simulator used to simulate the closed-loop thermal dynamics of the Regio-Dosto trains under the existing control architecture.
In Section \ref{sec:III}, we present the developed DDPC architecture. 
We start by analyzing the real data collected by SBB to develop this architecture in Section \ref{sec:IIIA}. 
We then use the collected data to design a data-driven multistep prediction model for the train thermal dynamics based on the Transient Predictor in Section \ref{sec:IIIB}. 
Finally, we formulate the optimal control problem of the proposed DDPC layer in Section \ref{sec:IIIC}. 
In Section \ref{sec:IV}, we run multiple simulations to evaluate the data driven  multistep prediction model and the DDPC architecture. 
In Section \ref{sec:V}, we report on the experiments \color{black} to investigate the energy efficiency of the proposed DDPC architecture on real trains. 
Finally, we provide some concluding remarks in Section \ref{sec:VI}.

\section{Thermal Dynamics Simulator}
\label{sec:II}

A Regio-Dosto train comprises multiple coaches, each of which is divided into two halves: right and left. Each coach is also equiped with two HVAC systems, with one system serving each half. Figure \ref{fig:coach} provides a schematic representation of one half of a coach. The figure illustrates that each coach has three levels: a lower deck, a middle deck and an upper deck. While the entire coach receives the same temperature setpoint, each HVAC system is designed to control each deck individually.

\begin{figure}[!t]
    \centering
    \includegraphics[trim=2cm 4cm 2cm 4cm,clip=true,width=8cm]{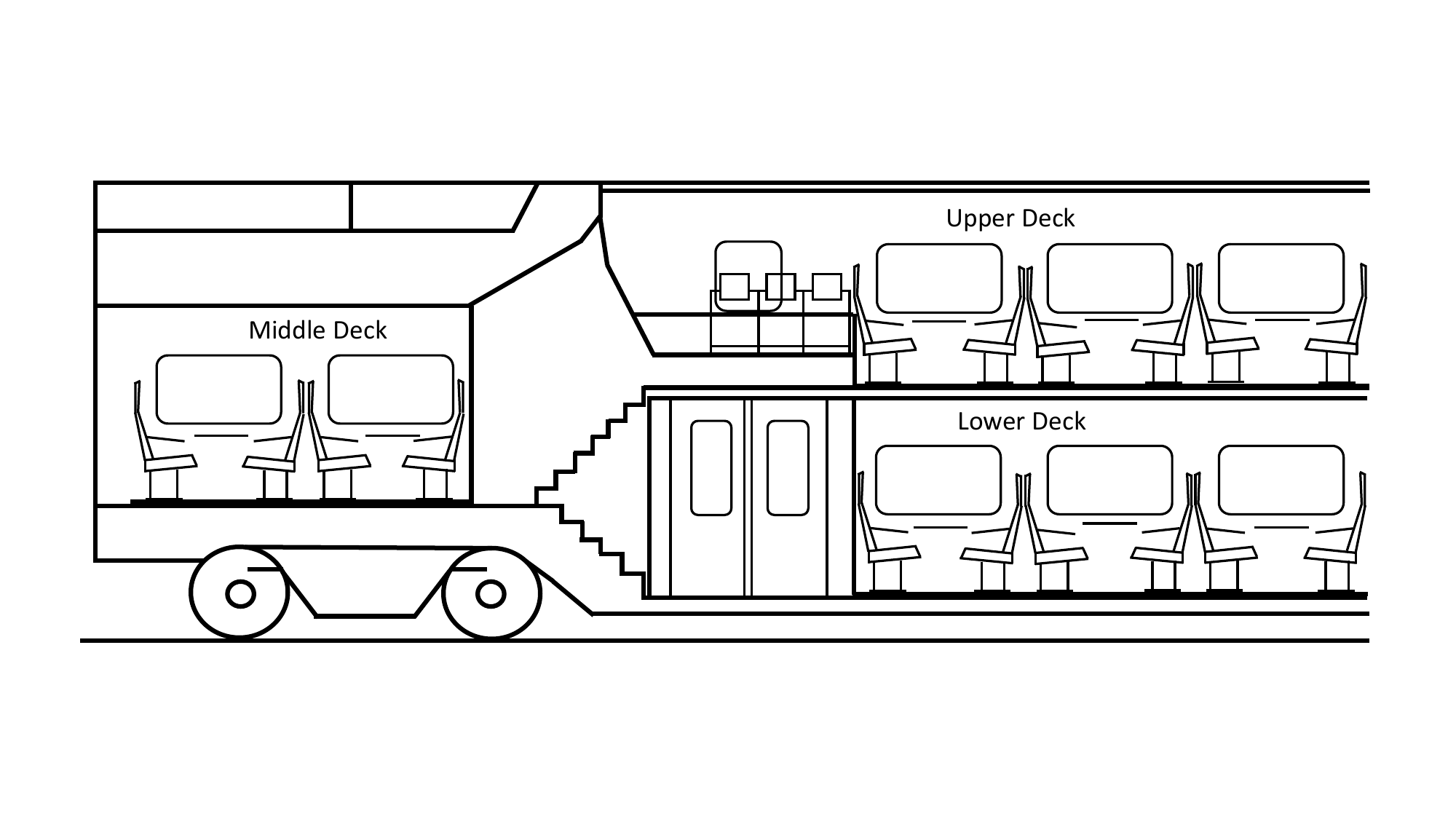}
    \caption{A schematic diagram of the considered coach.}
    \label{fig:coach}
\end{figure}

The SBB train thermal dynamics simulator, described in Figure \ref{fig:topology} models the closed-loop behavior of a single HVAC system along with its associated half coach under the existing control architecture. Though simplified, the SBB simulator still provides a good representation of the thermal dynamics and can be used both for testing the performance of the developed architecture in simulation before being implemented on a real train (Section VI) and for the implementation of the DDPC architecture (Section III.C).
An early description of the SBB simulator is provided in \cite{gasser2019Mach}.
As shown in the schematic diagram in Figure \ref{fig:topology}, the SBB simulator comprises three main blocks: the \textit{rule-based control block}, the \textit{HVAC system and control block}, and the \textit{train thermal dynamics block}. The forthcoming subsections describe each block.

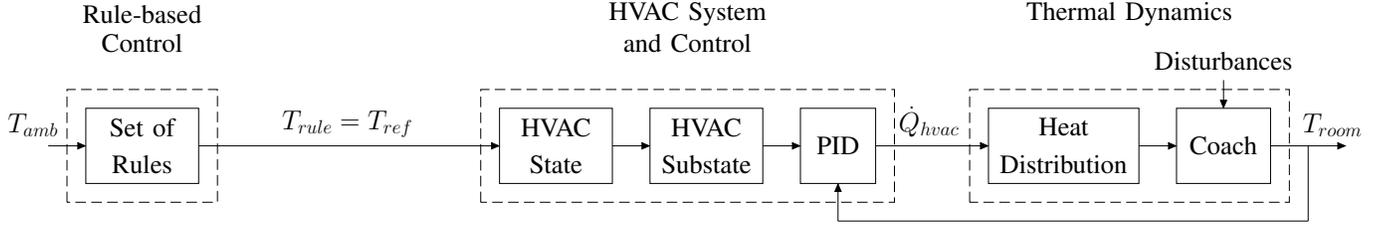
\begin{figure*}[!t]
	\centering
	\scalebox{0.5}{\begin{tikzpicture}

\draw[black, dash pattern= {on 8pt off 4pt}] (1.5,0.5) rectangle (5.5,3.5); 
\huge
\node at (3.5,5.5) [] {\textcolor{black}{Rule-based}}; 
\node at (3.5,4.75) [] {\textcolor{black}{Control}}; 
\draw[black] (2,1) rectangle (5,3); 
\huge
\node at (3.5,2.5) [] {\textcolor{black}{Set of}};
\node at (3.5,1.5) [] {\textcolor{black}{Rules}};
\draw[-triangle 45] (1,2) -- (2,2) node[left, above, xshift=-40pt] {$T_{amb}$};
\draw[-triangle 45] (5,2) -- (13,2) node[left, above, xshift=-115pt] {$T_{rule} = T_{ref}$};

\draw[black, dash pattern= {on 8pt off 4pt}] (12.5,0.5) rectangle (23.5,3.5); 
\huge
\node at (18,5.5) [] {\textcolor{black}{HVAC System}};
\node at (18,4.75) [] {\textcolor{black}{and Control}};
\draw[black] (13,1) rectangle (16,3); 
\huge
\node at (14.5,2.5) [] {\textcolor{black}{HVAC}};
\node at (14.5,1.5) [] {\textcolor{black}{State}};
\draw[-triangle 45] (16,2) -- (17,2); 
\draw[black] (17,1) rectangle (20,3); 
\huge
\node at (18.5,2.5) [] {\textcolor{black}{HVAC}};
\node at (18.5,1.5) [] {\textcolor{black}{Substate}};
\draw[-triangle 45] (20,2) -- (21,2); 
\draw[black] (21,1) rectangle (23,3); 
\huge
\node at (22,2) [] {\textcolor{black}{PID}};
\draw[-triangle 45] (23,2) -- (26,2) node[left, above, xshift=-45pt] {$\dot{Q}_{hvac}$};

\draw[black, dash pattern= {on 8pt off 4pt}] (25.5,0.5) rectangle (34,3.5); 
\huge
\node at (29.75,5.5) [] {\textcolor{black}{Thermal Dynamics}}; 
\draw[black] (26,1) rectangle (30,3); 
\huge
\node at (28,2.5) [] {\textcolor{black}{Heat}};
\node at (28,1.5) [] {\textcolor{black}{Distribution}};
\draw[-triangle 45] (30,2) -- (31,2); 
\draw[black] (31,1) rectangle (33.5,3); 
\huge
\node at (32.25,2) [] {\textcolor{black}{Coach}};
\draw[-triangle 45] (33.5,2) -- (35.5,2) node[right, above, xshift=-10pt] {$T_{room}$};
\draw[-] (34.5,2) -- (34.5,0);
\draw[-] (34.5,0) -- (22,0);
\draw[-triangle 45] (22,0) -- (22,1);
\draw[-triangle 45] (32.25,3.75) -- (32.25,3);
\node at (32.25,4.25) [] {\textcolor{black}{Disturbances}};

\end{tikzpicture}}
	\caption{A block diagram of the SBB simulator.}
	\label{fig:topology}
\end{figure*}

\subsection{Rule-based Controller}

Based on a set of prespecified rules, the rule-based control block uses the ambient temperature $T_{amb}$ to determine the rule-based temperature $T_{rule}$. The reference temperature $T_{ref}$ is then set to the rule-based temperature $T_{rule}$. Although the HVAC system has a single setpoint $T_{ref}$, it can control each deck individually. For each HVAC system, there are three main modes: the \textit{regular mode}, the \textit{slumber mode} and the \textit{off mode}, each with its own prespecified rules. The set of prespecified rules leads to a non-decreasing piecewise linear relation between the reference temperature and the ambient temperature.

\subsection{HVAC System \& Control}

The HVAC system and control block contains the logic behind operating the HVAC system. This block takes the temperature setpoint $T_{ref}$ as an input and generates the required input heat flow rate $\dot{Q}_{hvac}$ so that the room temperature $T_{room}$ reaches the setpoint $T_{ref}$. Although all three decks share the same setpoint $T_{ref}$, each deck is controlled individually and hence, we define $T_{room} = [T_{room}^{up} \ T_{room}^{mid} \ T_{room}^{low}]^\top$ and $\dot{Q}_{hvac} = [\dot{Q}_{hvac}^{up} \ \ \dot{Q}_{hvac}^{mid} \ \ \dot{Q}_{hvac}^{low}]^\top$ where the superscripts $up$, $mid$ and $low$ refer, respecetively, to the upper, middle and lower decks shown in Figure \ref{fig:coach}.

The HVAC System and Control block mainly comprises three main sub-blocks: the HVAC state block, the HVAC substate block and the PID control block. The HVAC state block determines the HVAC state based on the HVAC mode as well as the room, reference and ambient temperatures.
The regular mode contains multiple states including preheating, precooling, heating, cooling and a mixed state where both heating and cooling could be activated. The slumber mode also includes multiple states including active heating, active ventilation, active cooling and an off state where the HVAC system is switched off.
The off mode only includes a single off state.
The HVAC substate block then decides the HVAC substate based on the HVAC state, the room temperature, the ambient temperature and the occupancy. 
The HVAC substate mainly determines whether circulated air or outside air or a mixture of both should be used by the HVAC system. 

Once the state and substate of each HVAC system are determined, a PID control law given by,
\begin{align*}
    \dot{Q}_{hvac}^{up}(t) &= k_p^{u} \delta T_{room}^{up}(t) + k_i^{u} \hspace{-0.15cm} \int \hspace{-0.15cm} \delta T_{room}^{up}(t) dt + k_d^{u} \delta \dot{T}_{room}^{up}(t) \\
    \dot{Q}_{hvac}^{mid}(t) &= k_p^{m} \delta T_{room}^{mid}(t) + k_i^{m} \hspace{-0.15cm} \int \hspace{-0.15cm} \delta T_{room}^{mid}(t) dt + k_d^{m} \delta \dot{T}_{room}^{mid}(t) \\
    \dot{Q}_{hvac}^{low}(t) &= k_p^{l} \delta T_{room}^{low}(t) + k_i^{l} \hspace{-0.15cm} \int \hspace{-0.15cm} \delta T_{room}^{low}(t) dt + k_d^{l} \delta \dot{T}_{room}^{low}(t)
\end{align*}
is implemented by the PID sub-block to drive the room temperature $T_{room}$ to the desired setpoint $T_{ref}$. Here $k_p^u$, $k_i^u$, $k_d^u$, $k_p^m$, $k_i^m$, $k_d^m$, $k_p^l$, $k_i^l$ and $k_d^l$ are the control gains and $\delta T_{room}^i(t) = T_{ref}^i(t) - T_{room}^i(t)$ denotes the error between the reference and room temperatures at time $t$ for the corresponding deck $i \in \{up,\ mid,\ low\}$. The HVAC state and substate determines the parameters of the PID controllers including the limits on the PID output and its rate of change, the PID initial conditions, the PID reset signal and the anti-windup feature.

Besides the main HVAC system, Regio Dosto trains are also equipped with a floor and wall heating system that can control each deck individually. Floor and wall heating is only activated in the regular mode during the preheating and heating states. The input heat flow rate of this system is denoted by $\dot{Q}_{fw} = [\dot{Q}_{fw}^{up} \ \dot{Q}_{fw}^{mid} \ \dot{Q}_{fw}^{low}]^\top$. While $\dot{Q}_{fw}$ is set to its maximum value in the preheating state, it is computed as a function of the ambient temperature using a piecewise linear relation in the heating state. 

\subsection{Thermal Dynamics}

This block includes a thermal model that describes the thermal dynamics of one half of a coach in the Regio-Dosto trains. It takes the heat flow rate of the HVAC system of the three decks $\dot{Q}_{hvac} = \left[ \dot{Q}_{hvac}^{up} \ \dot{Q}_{hvac}^{mid} \ \dot{Q}_{hvac}^{low} \right]^\top$ and that of the floor and wall heating system of the three decks $\dot{Q}_{fw}  = \left[ \dot{Q}_{fw}^{up} \ \dot{Q}_{fw}^{mid} \ \dot{Q}_{fw}^{low} \right]^\top$ as inputs and models the behavior of the room temperature $T_{room} = \left[ T_{room}^{up} \ T_{room}^{mid} \ T_{room}^{low} \right]^\top$. Overall, this model has a six dimensional input vector $\dot{Q}_{in} = \left[ \dot{Q}_{hvac} \ \dot{Q}_{fw} \right]^\top$ and a three dimensional output vector $T_{room} = \left[ T_{room}^{up} \ T_{room}^{mid} \ T_{room}^{low} \right]^\top$. 
The model comprises nine states which include the room temperature of the three decks $T_{room} = \left[ T_{room}^{up} \ T_{room}^{mid} \ T_{room}^{low} \right]^\top$, the inventory (seats, upholstery, etc) temperature of the three decks $T_{inv} = \left[ T_{inv}^{up} \ T_{inv}^{mid} \ T_{inv}^{low} \right]^\top$, and the chassis temperature $T_{chassis} = \left[ T_{chassis}^{up} \ T_{chassis}^{mid} \ T_{chassis}^{low} \right]^\top$ of the ceiling of the coach $\left( T_{chassis}^{up} \right)$, the floor of the coach $\left( T_{chassis}^{low} \right)$ and the floor between the upper and lower decks $\left( T_{chassis}^{mid} \right)$. 

Figure \ref{fig:ThermalDynamics} shows a schematic diagram of the heat transfer through air movement and the heat flow through conduction, convection and radiation among the sections of the train as well as between the train and the environment. 
The environment here refers to various factors such as the ambient temperature, the solar and ground radiations, the occupancy and the door effect. 

\begin{figure*}[ht]
	\centering
    \scalebox{0.5}{\begin{tikzpicture}

\draw[ultra thick, dash pattern=on 10pt off 5pt, ->] (10.5, -1) -- (10.5, 0);
\draw[ultra thick, dash pattern=on 10pt off 5pt, ->] (11.5, -1) -- (11.5, 6);
\draw[ultra thick, dash pattern=on 10pt off 5pt, ->] (12.5, -1) -- (12.5, 2);
\draw[ultra thick, dash pattern=on 10pt off 5pt, ->] (23.5, -1) -- (23.5, 2);
\draw[ultra thick, dash pattern=on 10pt off 5pt, ->] (24.5, -1) -- (24.5, 0);
\draw[ultra thick, dash pattern=on 10pt off 5pt, ->] (25.5, -1) -- (25.5, 6);
\draw[ultra thick, dash pattern=on 10pt off 5pt, ->] (10.5, 14) -- (10.5, 7);
\draw[ultra thick, dash pattern=on 10pt off 5pt, ->] (11.5, 14) -- (11.5, 13);
\draw[ultra thick, dash pattern=on 10pt off 5pt, ->] (12.5, 14) -- (12.5, 11);
\draw[ultra thick, dash pattern=on 10pt off 5pt, ->] (23.5, 14) -- (23.5, 11);
\draw[ultra thick, dash pattern=on 10pt off 5pt, ->] (24.5, 14) -- (24.5, 7);
\draw[ultra thick, dash pattern=on 10pt off 5pt, ->] (25.5, 14) -- (25.5, 13);

\draw[ultra thick, dashed, <->] (26.5, 1) -- (26.5, 6);
\draw[ultra thick, dashed, <->] (26.5, 7) -- (26.5, 12);
\draw[ultra thick, dashed, <->] (27.5, 1) -- (27.5, 12);

\draw[ultra thick, <->] (12.5, 3) -- (12.5, 6);
\draw[ultra thick, <->] (12.5, 7) -- (12.5, 10);

\draw[ultra thick, dash pattern=on 15pt off 10pt on 5pt off 10pt, <->] (5, 0.25) -- (8, 0.25);
\draw[ultra thick, dash pattern=on 15pt off 10pt on 5pt off 10pt, <->] (5, 2.5) -- (12, 2.5);
\draw[ultra thick, dash pattern=on 15pt off 10pt on 5pt off 10pt, <->] (5, 6.5) -- (8, 6.5);
\draw[ultra thick, dash pattern=on 15pt off 10pt on 5pt off 10pt, <->] (5, 10.5) -- (12, 10.5);
\draw[ultra thick, dash pattern=on 15pt off 10pt on 5pt off 10pt, <->] (5, 12.75) -- (8, 12.75);

\draw[ultra thick, <->] (5, 0.75) -- (8, 0.75);
\draw[ultra thick, <->] (9.5, 1) -- (9.5, 6);
\draw[ultra thick, <->] (9.5, 7) -- (9.5, 12);
\draw[ultra thick, <->] (5, 12.25) -- (8, 12.25);
\draw[ultra thick, <->] (13, 6.5) -- (15, 6.5);
\draw[ultra thick, <->] (18, 3) -- (18, 4);
\draw[ultra thick, <->] (18, 9) -- (18, 10);
\draw[ultra thick, <->] (22.5, 1) -- (22.5, 2);
\draw[ultra thick, <->] (22.5, 3) -- (22.5, 6);
\draw[ultra thick, <->] (22.5, 7) -- (22.5, 10);
\draw[ultra thick, <->] (22.5, 11) -- (22.5, 12);

\draw[black,fill=white] (0,0) rectangle (5,13); 
\huge
\node at (2.5,6.5) [] {\textcolor{black}{Environment}};

\draw[black,fill=white] (8,0) rectangle (28,1); 
\huge
\node at (18,0.5) [] {\textcolor{black}{Lower Chassis}};
\draw[black,fill=white] (22,6) rectangle (27,7); 
\huge
\node at (24.5,6.5) [] {\textcolor{black}{Middle Floor}};
\draw[black,fill=white] (8,12) rectangle (28,13); 
\huge
\node at (18,12.5) [] {\textcolor{black}{Upper Chassis}};

\draw[black,fill=white] (12,2) rectangle (24,3); 
\huge
\node at (18,2.5) [] {\textcolor{black}{Lower Deck}};
\draw[black,fill=white] (8,6) rectangle (13,7); 
\huge
\node at (10.5,6.5) [] {\textcolor{black}{Middle Deck}};
\draw[black,fill=white] (12,10) rectangle (24,11); 
\huge
\node at (18,10.5) [] {\textcolor{black}{Upper Deck}};

\draw[black,fill=white] (15,4) rectangle (20,5); 
\huge
\node at (17.5,4.5) [] {\textcolor{black}{Lower Inventory}};
\draw[black,fill=white] (15,6) rectangle (20,7); 
\huge
\node at (17.5,6.5) [] {\textcolor{black}{Middle Inventory}};
\draw[black,fill=white] (15,8) rectangle (20,9); 
\huge
\node at (17.5,8.5) [] {\textcolor{black}{Upper Inventory}};

\draw[black,fill=white] (9,-2) rectangle (17,-1); 
\huge
\node at (13,-1.5) [] {\textcolor{black}{HVAC}};
\draw[black,fill=white] (19,14) rectangle (27,15); 
\huge
\node at (23,14.5) [] {\textcolor{black}{HVAC}};
\draw[black,fill=white] (9,14) rectangle (17,15); 
\huge
\node at (13,14.5) [] {\textcolor{black}{Floor/Wall Heating}};
\draw[black,fill=white] (19,-2) rectangle (27,-1); 
\huge
\node at (23,-1.5) [] {\textcolor{black}{Floor/Wall Heating}};


\node at (3,-3) [] {\textcolor{black}{Conduction}};
\draw[ultra thick, dashed, <->] (5, -3) -- (7, -3);
\node at (10,-3) [] {\textcolor{black}{Convection}};
\draw[ultra thick, <->] (12,-3) -- (14,-3);
\node at (17,-3) [] {\textcolor{black}{Radiation}};
\draw[ultra thick, dash pattern=on 15pt off 10pt on 5pt off 10pt, <->] (19, -3) -- (21, -3);
\node at (24,-3) [] {\textcolor{black}{Input}};
\draw[ultra thick, dotted, ->] (25, -3) -- (27, -3);

\end{tikzpicture}}
	\caption{Thermal Dynamics \cite{gasser2019Mach}.}
	\label{fig:ThermalDynamics}
\end{figure*}
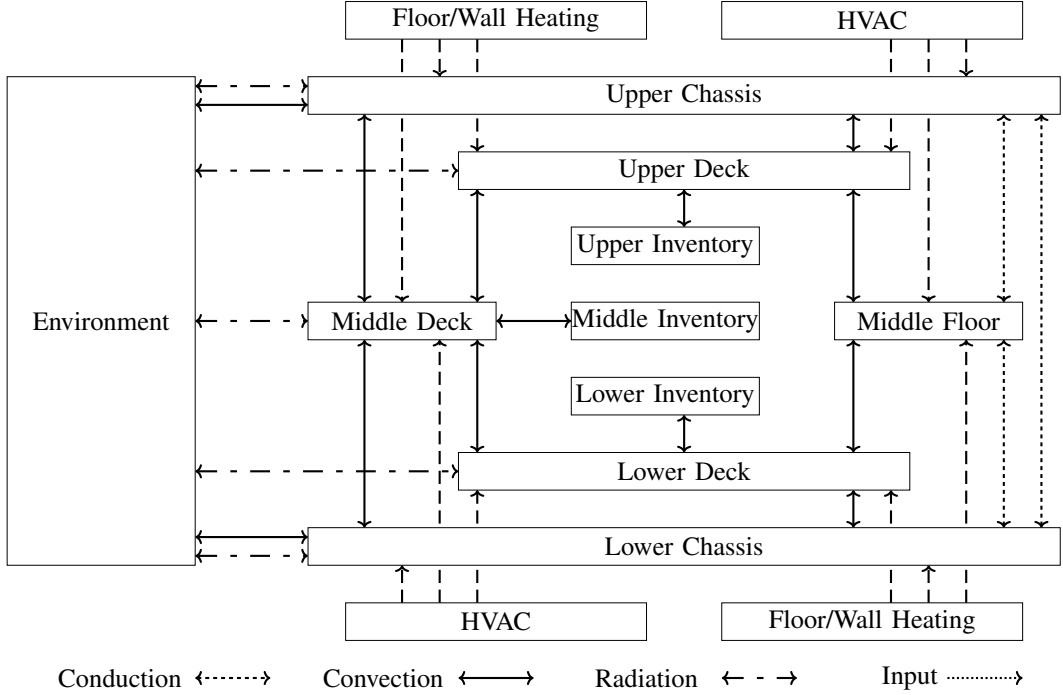

As shown in Figure \ref{fig:topology}, the thermal dynamics block is divided into two sub-blocks: the heat distribution block and the coach block.
The control inputs from each HVAC system and floor/wall heating system are not fully transferred to the corresponding deck. 
Indeed, a considerable part of the heat input is lost in the chassis of the train on its way to the deck. 
A small part of the heat input to each deck also affects the other decks. 
For this reason, the input vector $\dot{Q}_{in}$ is not the actual input used to regulate the room temperature $T_{room}$.
We denote the actual six dimensional input vector by $\dot{Q}_{act} = \left[ \dot{Q}_{room}^{up} \ \dot{Q}_{room}^{mid} \ \dot{Q}_{room}^{low} \ \dot{Q}_{chassis}^{up} \ \dot{Q}_{chassis}^{mid} \ \dot{Q}_{chassis}^{low} \right]^\top$ with $\dot{Q}_{room}^{up}$,  $\dot{Q}_{room}^{mid}$ and $\dot{Q}_{room}^{low}$ referring to the heat transferred to the upper, middle and lower decks, $\dot{Q}_{chassis}^{up}$ and $\dot{Q}_{chassis}^{low}$ denoting the heat transferred to the upper and lower parts of the chassis and $\dot{Q}_{chassis}^{mid}$ being the heat transferred to the middle floor between the upper and lower decks. 
The heat distribution block describes the relation between $\dot{Q}_{in}$ and $\dot{Q}_{act}$ that is given by,
\begin{equation*}
    \dot{Q}_{act} = \Lambda \dot{Q}_{in},
\end{equation*}
where $\Lambda$ is a matrix with each column summing up to 1 and whose $i$-th row and $j$-th column entry describes the fraction of the contribution of the $j$-th element in $\dot{Q}_{in}$ to the $i$-th element in $\dot{Q}_{act}$.

The coach sub-block models the thermal behavior of each part in the coach illustrated in Figure \ref{fig:ThermalDynamics}. The thermal behavior of each part is described using the generic relation,
\begin{equation}
    \label{eq:Qdot}
    \dot{T}_{p}^{d} = \frac{1}{m_{p}^{d} c_{p}^{d}} \sum_{i=1}^{N_p^d} \dot{Q}_{p,i}^{d}
\end{equation}
where $p \in \{room, \ chassis, \ inv\}$ refers to the part of the coach, $d \in \{up, \ mid, \ low\}$ refers to the deck, $m_p^d$ is the mass of part $p$ in deck $d$, $c_p^d$ is the specific heat capacity of part $p$ in deck $d$, $Q_{p,i}^{d}$ is the $i$-th heat flow rate affecting part $p$ in deck $d$ and $N_{p}^{d}$ is the number of heat flow rates affecting part $p$ in deck $d$. The heat flow rates $\dot{Q}_{p,i}^{d}$ result from various heat transfer mechanisms, such as convection, conduction, and radiation. In the sequel, we discuss the different heat flow rates in \eqref{eq:Qdot}.

\subsubsection{Input} Recall the input heat flow rate $\dot{Q}_{act} = \left[ \dot{Q}_{room}^{up} \ \dot{Q}_{room}^{mid} \ \dot{Q}_{room}^{low} \ \dot{Q}_{chassis}^{up} \ \dot{Q}_{chassis}^{mid} \ \dot{Q}_{chassis}^{low} \right]^\top$. For each $p \in \{room, \ chassis\}$ and $d \in \{ up, \ mid, \ low \}$, the input heat flow rate $\dot{Q}_{p}^{d}$ directly affects the rate of change of the corresponding temperature $T_{p}^{d}$.

\subsubsection{Convection} As shown in Figure \ref{fig:ThermalDynamics}, convection occurs among the decks, between the decks and the inventories, and between the decks and the chassis.
Convection yields a heat flow rate that depends on the temperature difference of the two parts involved and affects the rate of change of the temperature of both parts. 
The lower and upper chassis also exchange heat with the external environment due to convection.
This heat transfer leads to heat flow rates that are proportional to the difference between the ambient temperature on one side and the temperatures of the upper and lower chassis on the other side. 
Each of the resulting heat flow rates affects the rate of change of the temperature of the corresponding chassis. 
An additional complication is that the proportionality constant of the heat exchange with the environment depends on the train speed.

\subsubsection{Conduction} Conduction takes place between the upper chassis, the lower chassis and the middle floor as shown in Figure \ref{fig:ThermalDynamics}. Similar to convection, conduction results in a heat flow rate that depends on the temperature difference between the two parts involved and affects the rate of change of the temperature of both parts.

\subsubsection{Solar Radiation} Solar radiation plays a key role in the thermal dynamics of the train by affecting all decks through the windows. Additionally, it impacts both the upper and lower chassis. 
By defining the global irradiation $\dot{Q}_{G}$, the solar elevation angle $\alpha$, the solar azimuth angle $\beta$ and the train direction $\theta$ 
, the vertical component of the solar radiation affecting the upper chassis is proportional to $|\dot{Q}_G \cos(\alpha)|$.
Whereas the horizontal component of the solar radiation affecting the chassis and the decks is proportional to $|\dot{Q}_G \sin(\alpha) \sin(\beta-\theta)|$. Note that the simulator does not consider local effects, such as buildings, bridges and tunnels on solar radiation—an interesting topic for future investigation.

\subsubsection{Ground Radiation} This effect occurs due to the thermal radiation emitted by the ground as the track of the train absorbs heat from sunlight and radiates it. Denoting the track temperature by $T_G$, according to the model, the heat flow rate due to ground radiation occurs only when the ambient temperature $T_{amb}$ exceeds $20 \degree C$, depends on the difference $T_G^4 - T_{chassis}^{low^4}$ and affects the rate of change of the temperature $T_{chassis}^{low}$. The track temperature $T_G$ is computed as a function of the ambient temperature $T_{amb}$.

\subsubsection{Occupancy} Passengers could be a major heat source inside the train. To model their thermal effect, a heat flow rate of  100 W is added for each passenger inside the train. This heat source affects the rate of change of the temperature of the deck where the passenger is.

\subsubsection{Door Effect} As the train doors open when the train arrives in a station, heat is transferred between the train and the external environment. Since the doors are in the lower deck, the resulting heat flow rate affects the rate of change of the temperature $T_{room}^{low}$ and depends on the difference between the ambient temperature and the temperature of the lower deck.

\section{Methodology}
\label{sec:III}

In this section, we make use of data from real trains (Section \ref{sec:IIIA}) to present a data driven model (Section \ref{sec:IIIB}) as a basis for the DDPC architecture (Section \ref{sec:IIIC}). 

\subsection{Real Data Processing}
\label{sec:IIIA}

The SBB data for the Regio Dosto fleet used here comprises two datasets. The first is collected between February 22, 2022 and March 31, 2022 from eight coaches in eight different trains.
The second is collected between July 4, 2023 and July 23, 2023 from four coaches in four different trains.
In both datasets, the involved trains are operating in different routes under different weather conditions.
The trains used to collect the second dataset are a subset of the trains from which the first dataset is collected.

Each dataset contains raw data from three sources: HVAC data, weather measurements and train recordings. 
The HVAC data includes the HVAC mode, the reference temperature $T_{ref}$, the room temperature $T_{room}$ and the train speed $V$. The sampling time of the Winter/Spring dataset is 1 minute, whereas that of the Summer dataset is 2 seconds. 
The weather measurements include the train's latitude and longitude together with the corresponding ambient temperature $T_{amb}$, global irradiation $\dot{Q}_{G}$, solar elevation angle $\alpha$ and solar azimuth angle $\beta$. The sampling time of the Winter/Spring dataset is 10 minutes, whereas that of the Summer dataset is 1 minute.
We use the train's latitude and longitude to calculate the train's direction.
The train recordings include the schedule of each train as well as the occupancy data of each trip, represented as a percentage $P$ of the train's maximum capacity. Unlike the HVAC and weather data, the train recordings 
do not have a specific sampling time as it is recorded per trip.

Although the raw data is rich, it can not be used directly due to the missing data and the variability in the data formats and the sampling times.
For the data to be usable, we go through multiple data processing steps.
First, we fuse the data from the different sources to have a consistent and informative data trajectory for each train on each day with the same data format. 
We then aggregate the resulting trajectories to have a sampling time of $1$ minute for all trajectories.
Next, we segment and filter the resulting data to reach distinct trajectories preserving temporal continuity without any missing data.
The final step is a down-sampling step to generate filtered trajectories with sampling times of $5$ and $10$ minutes to allow DDPC at different granularities. Table \ref{tab:my_label} shows the total number of trajectories $\#T$ and the average number of data points $\#D$ in every trajectory for each combination of dataset and sampling time.

\begin{table}[]
    \centering
    \begin{tabular}{|c||cc|cc|cc|}
        \hline
        & \multicolumn{2}{c|}{1 minute} & \multicolumn{2}{c|}{5 minutes} & \multicolumn{2}{c|}{10 minutes} \\
        \cline{2-7}
        & $\#$T & $\#$D & $\#$T & $\#$D & $\#$T & $\#$D \\
        \hline
        Winter/Spring & 344 & 266 & 320 & 57 & 303 & 30 \\
        \hline
        Summer & 141 & 316 & 139 & 64 & 138 & 32 \\
        \hline
    \end{tabular}
    \caption{Processed data trajectories}
    \label{tab:my_label}
\end{table}

\subsection{Data-driven Prediction Model}
\label{sec:IIIB}

We denote the total number of trajectories resulting from the data processing steps in Section \ref{sec:IIIA} with a specific sampling time by $N$ and the length of the $i$-th trajectory by $n_i$ where $i \in \{1,\hdots,N\}$. The $i$-th trajectory comprises the input data points $u_i^l$, the output data points $y_i^l$ and the disturbance data points $d_i^l$ given by,
\begin{equation*}
    \begin{aligned}
        u_i^l &= T_{ref,i}^l \quad
        y_i^l = 
        \begin{bmatrix} 
            T_{room,i}^{up,l} &
            T_{room,i}^{mid,l} & 
            T_{room,i}^{low,l} 
        \end{bmatrix}^\top, \\
        d_i^l &= 
        \begin{bmatrix} 
            T_{amb,i}^l & 
            \dot{Q}_{Gt,i}^l & 
            \dot{Q}_{Gs,i}^l & 
            P_{i}^l & 
            V_{i}^l
        \end{bmatrix}^\top, 
    \end{aligned}
\end{equation*}
where $l \in \{1,\hdots,n_i\}$, $\dot{Q}_{Gt}^l = |\dot{Q}_G^l cos(\alpha^l)|$ and $\dot{Q}_{Gs}^l = |\dot{Q}_G^l sin(\alpha^l) sin(\beta^l - \theta^l)|$ are the projections of the solar radiation on the top and side of the train, respectively. Thus, for the $i$-th trajectory, we define the input $u_i$, output $y_i$ and disturbance $d_i$ as,
\begin{equation*}
    \begin{aligned}
        u_i = \left[u_i^{1^\top}, \hdots \right. & \left. u_i^{n_i^\top}\right]^\top, \quad
        y_i = \left[y_i^{1^\top}, \hdots y_i^{n_i^\top}\right]^\top, \\
        &d_i = \left[d_i^{1^\top}, \hdots d_i^{n_i^\top}\right]^\top,
    \end{aligned}
\end{equation*}
We use these trajectories to derive a data-driven multistep prediction model based on the Transient Predictor as a basis for our DDPC architecture.
Note that we only use the trajectories corresponding to the regular mode as this is the mode of interest that is used when the train is transporting passengers.
The prediction model uses input, output and disturbance data from the recent past (measured on-line) and predictions of the disturbance and input in the near future (the former assumed to be available through forecasts, the latter our decision variables) to predict the output trajectory in the near future. The DDPC layer then optimizes our choice of near future inputs. 
Though the train thermal dynamics involve some nonlinearities (such as the ground radiation) we restrict our attention to linear prediction models to enable faster on-line computation of the predictive controller; extending the prediction model by the inclusion of nonlinear features is an interesting direction of future work.

To account for unmeasured states such as the chassis temperature $T_{chassis}$ and inventory temperature $T_{inv}$, we consider the prediction model,
\begin{equation*}
    y_f^t = \hat{\Phi}_{p} z_p^t + \hat{\Phi}_{u} u_f^t + \hat{\Phi}_{d} d_f^t + \hat{\Phi}_{y} y_f^t,
\end{equation*}
where at time $t$, $u_f^t = [u(t+1)^\top,\hdots,u(t+T)^\top]^\top$ is the future input trajectory, $d_f^t = [d(t+1)^\top,\hdots,d(t+T)^\top]^\top$ is the future disturbance trajectory, $y_f^t = [y(t+1)^\top,\hdots,y(t+T)^\top]^\top$ is the future output trajectory and $z_p^t = [y(t-\rho+1)^\top,u(t-\rho+1)^\top,d(t-\rho+1)^\top,\hdots,y(t)^\top,u(t)^\top,d(t)^\top]^\top$ is the past output-input-disturbance trajectory 
with $\rho$ and $T$ denoting the number of lead-in data and the prediction horizon, respectively. The matrices $\hat{\Phi}_{p}$, $\hat{\Phi}_{u}$ and $\hat{\Phi}_{y}$ are block-lower-triangular and hence, define causal linear maps. In other words, these matrices are designed such that, for any $j \in \{1,\hdots,N\}$, the future output $y(t+j)$ depend only on the inputs, disturbances and outputs up to the prediction step $j-1$.
The model can be compactly represented as,
\begin{equation}
    \label{sec3B:mspm}
    y_f^t = \hat{H} 
    \begin{bmatrix}
        z_p^t \\
        u_f^t \\
        d_f^t
    \end{bmatrix},
\end{equation}
where $\hat{H}=[(I-\hat{\Phi}_{y})^{-1} \hat{\Phi}_{z}, \ (I-\hat{\Phi}_{y})^{-1} \hat{\Phi}_{u}, \ (I-\Phi_{y})^{-1} \hat{\Phi}_{d}]$.

To estimate $\hat{H}$ from the data, we define $z_f^t = [y(t+1)^\top,u(t+1)^\top,d(t+1)^\top,\hdots,y(t+T)^\top,u(t+T)^\top,d(t+T)^\top]$, allowing us to write the prediction model as,
\begin{equation*}
    y_f^t = \hat{\Phi} \begin{bmatrix} z_p^{t^\top} \\ z_f^{t^\top} \end{bmatrix}^\top, 
\end{equation*}
where $\hat{\Phi} = [\hat{\Phi}_p \ \hat{\Phi}_f]$ is the Transient Predictor \cite{moffat2024transient} with $\hat{\Phi}_f$ constructed using $\hat{\Phi}_u$, $\hat{\Phi}_y$ and $\hat{\Phi}_d$ in the obvious way. 

For each of the $N$ trajectories in our disposal, we construct the Hankel matrices,
\begin{equation*}
    \mathcal{Z}_i = 
    \begin{bmatrix}
        z_i^1 & \hspace{-0.25cm} \hdots & \hspace{-0.25cm} z_i^{n_i-T-\rho+1} \\
        \vdots & \hspace{-0.25cm} \ddots & \hspace{-0.25cm} \vdots \\
        z_i^{\rho+T} & \hspace{-0.25cm} \hdots & \hspace{-0.25cm} z_i^{n_i} \\
    \end{bmatrix},
    \mathcal{Y}_i =
    \begin{bmatrix}
        y_i^{\rho+1} & \hspace{-0.25cm} \hdots & \hspace{-0.25cm} y_i^{n_i-T+1} \\
        \vdots & \hspace{-0.25cm} \ddots & \hspace{-0.25cm} \vdots \\
        y_i^{\rho+T} & \hspace{-0.25cm} \hdots & \hspace{-0.25cm} y_i^{n_i} \\
    \end{bmatrix},
\end{equation*}
where $z_i^k = \left[y_i^{k^\top} \ u_i^{k^\top} \ d_i^{k^\top}\right]^\top$
and concatenate them to form the extended Hankel matrices,
\begin{equation*}
    \begin{aligned}
        &\begin{aligned}
            \mathcal{Z} &= 
            \begin{bmatrix}
                \mathcal{Z}_1 & \hdots & \mathcal{Z}_N \\
            \end{bmatrix} \\
            &:=
            \begin{bmatrix}
                Z_p^\top &
                Y_1^\top &
                U_1^\top &
                D_1^\top &
                \hdots &
                Y_T^\top &
                U_T^\top &
                D_T^\top
            \end{bmatrix}^\top, \\
        \end{aligned} \\
        &\mathcal{Y} = 
        \begin{bmatrix}
            \mathcal{Y}_1 & \hdots & \mathcal{Y}_N \\
        \end{bmatrix}
        :=
        \begin{bmatrix}
            Y_1^\top &
            \hdots &
            Y_T^\top
        \end{bmatrix}^\top, 
    \end{aligned}
\end{equation*}
where $Z_p \in \mathbb{R}^{9 \rho \times \mathcal{N}}$, $Y_j \in \mathbb{R}^{3 \times \mathcal{N}}$, $U_j \in \mathbb{R}^{1 \times \mathcal{N}}$ and $D_j \in \mathbb{R}^{5 \times \mathcal{N}}$ for all $j \in \{1,\hdots,T\}$ with $\mathcal{N} = \sum_{i=1}^{N} (n_i-T-\rho+1)$. 
Following \cite{moffat2024transient}, we then compute the Transient Predictor $\hat{\Phi}$ using the LQ decomposition of the extended Hankel matrix, and subsequently derive $\hat{H}$, as shown in Algorithm \ref{alg:msp}.

\begin{algorithm}
    \caption{Multistep Prediction Model}
    \label{alg:msp}
    \begin{algorithmic}[1]
        \INPUT Matrix $\mathcal{Z}$
        \OUTPUT Matrix $\hat{H}$
        \STATE $L \leftarrow LQ(\mathcal{Z})$ 
        \STATE $L^0 \leftarrow$ $L$ with the block-diagonal terms set to 0
        \STATE $L_y^0 \leftarrow$  the rows of $L^0$ corresponding to $Y_j$ for all $j \in \{1,\hdots,T\}$
        \STATE $\hat{\Phi} \leftarrow L_y^0 L^{-1}$ 
        \STATE $\hat{H} \leftarrow [(I-\hat{\Phi}_{y})^{-1} \hat{\Phi}_{z} \ (I-\hat{\Phi}_{y})^{-1} \hat{\Phi}_{u} \ (I-\hat{\Phi}_{y})^{-1} \hat{\Phi}_{d}]$ 
    \end{algorithmic}
\end{algorithm}

The Transient Predictor differs from the Subspace Predictor \cite{favoreel1999spc}, which minimizes the Frobenius norm of the Subspace Prediction \cite{moffat2025biasOfSubspace}, as the transient predictor minimizes the Frobenius Norm of the single-step prediction for the training data corresponding to each timestep in the future \cite{moffat2024transient}. The LQ decomposition achieves this minimization in a computationally efficient manner.

\subsection{Data-driven Predictive Control}
\label{sec:IIIC}

Based on the prediction model, we develop a predictive control scheme for train HVAC systems.
The proposed scheme takes the rule-based temperature $T_{rule}$ as an input and outputs the reference temperature $T_{ref}$ to be tracked by the HVAC controller as shown in Figure \ref{fig:controlSystem}.
Note that the DDPC architecture in Figure \ref{fig:controlSystem} no longer sets the reference temperature $T_{ref}$ to the rule-based temperature $T_{rule}$ as in the existing architecture in Figure \ref{fig:topology}.

\begin{figure*}[ht]
    \centering
    \scalebox{0.5}{\begin{tikzpicture}

\draw[black, dash pattern= {on 8pt off 4pt}] (1.5,0.5) rectangle (5.5,3.5); 
\huge
\node at (3.5,5.5) [] {\textcolor{black}{Rule-based}}; 
\node at (3.5,4.75) [] {\textcolor{black}{Control}}; 
\draw[black] (2,1) rectangle (5,3); 
\huge
\node at (3.5,2.5) [] {\textcolor{black}{Set of}};
\node at (3.5,1.5) [] {\textcolor{black}{Rules}};
\draw[-triangle 45] (1,2) -- (2,2) node[left, above, xshift=-40pt] {$T_{amb}$};
\draw[-triangle 45] (5,2) -- (7.75,2) node[left, above, xshift=-40pt] {$T_{rule}$};

\draw[black, dash pattern= {on 8pt off 4pt}, line width=3pt] (7.25,0.5) rectangle (10.75,3.5); 
\huge
\node at (9,5.5) [] {\textcolor{black}{\textbf{Predictive Control}}}; 
\draw[black, line width=3pt] (7.75,1) rectangle (10.25,3); 
\huge
\node at (9,2) [] {\textcolor{black}{\textbf{DDPC}}};
\draw[-triangle 45] (10.25,2) -- (13,2) node[left, above, xshift=-40pt] {$ T_{ref}$};
\draw[-triangle 45] (9,3.75) -- (9,3);
\node at (9,4.25) [] {\textcolor{black}{Forecast}};

\draw[black, dash pattern= {on 8pt off 4pt}] (12.5,0.5) rectangle (23.5,3.5); 
\huge
\node at (18,5.5) [] {\textcolor{black}{HVAC System}};
\node at (18,4.75) [] {\textcolor{black}{and Control}};
\draw[black] (13,1) rectangle (16,3); 
\huge
\node at (14.5,2.5) [] {\textcolor{black}{HVAC}};
\node at (14.5,1.5) [] {\textcolor{black}{State}};
\draw[-triangle 45] (16,2) -- (17,2); 
\draw[black] (17,1) rectangle (20,3); 
\huge
\node at (18.5,2.5) [] {\textcolor{black}{HVAC}};
\node at (18.5,1.5) [] {\textcolor{black}{Substate}};
\draw[-triangle 45] (20,2) -- (21,2); 
\draw[black] (21,1) rectangle (23,3); 
\huge
\node at (22,2) [] {\textcolor{black}{PID}};
\draw[-triangle 45] (23,2) -- (26,2) node[left, above, xshift=-45pt] {$\dot{Q}_{hvac}$};

\draw[black, dash pattern= {on 8pt off 4pt}] (25.5,0.5) rectangle (34,3.5); 
\huge
\node at (29.75,5.5) [] {\textcolor{black}{Thermal Dynamics}}; 
\draw[black] (26,1) rectangle (30,3); 
\huge
\node at (28,2.5) [] {\textcolor{black}{Heat}};
\node at (28,1.5) [] {\textcolor{black}{Distribution}};
\draw[-triangle 45] (30,2) -- (31,2); 
\draw[black] (31,1) rectangle (33.5,3); 
\huge
\node at (32.25,2) [] {\textcolor{black}{Coach}};
\draw[-triangle 45] (33.5,2) -- (35.5,2) node[right, above, xshift=-10pt] {$T_{room}$};
\draw[-] (34.5,2) -- (34.5,0);
\draw[-] (34.5,0) -- (9,0);
\draw[-triangle 45] (22,0) -- (22,1);
\draw[-triangle 45] (9,0) -- (9,1);
\draw[-triangle 45] (32.25,3.75) -- (32.25,3);
\node at (32.25,4.25) [] {\textcolor{black}{Disturbances}};

\end{tikzpicture}}
    \caption{Proposed control architecture.}
    \label{fig:controlSystem}
\end{figure*}
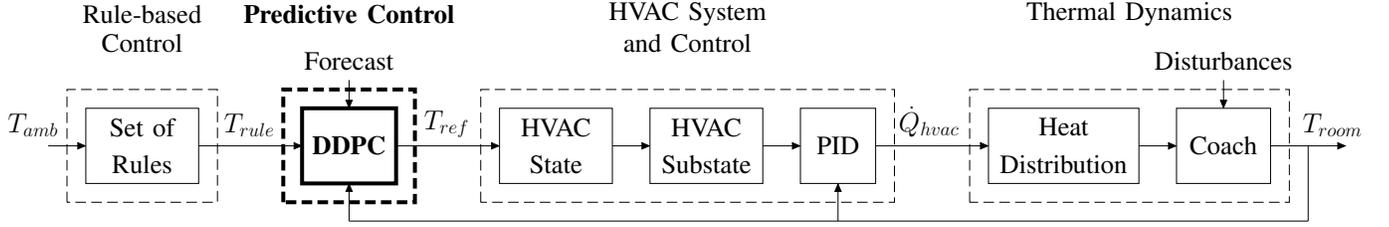

We define the future trajectories $u_f^t = [u(1|t)^\top,\hdots,u(T|t)^\top]^\top$, $y_f^t = [y(1|t)^\top,\hdots,y(T|t)^\top]^\top$, $d_f^t = [d(1|t)^\top,\hdots,d(T|t)^\top]^\top$ as well as the past and current trajectory $z_p^t = [y(t-\rho+1)^\top, u(t-\rho+1)^\top, d(t-\rho+1)^\top, \hdots, y(t-1)^\top, u(t-1)^\top, d(t-1)^\top, y(t)^\top, u(0|t)^\top, d(t)^\top]$. Note that while the current output and disturbance vectors are assumed to be measured, the current input is considered to be a decision variable.
In the spirit of MPC, we formulate and solve the optimal control problem to optimize for $u(k|t)$ for all $k \in \{0,\hdots,T\}$. We then apply the optimal control action $u^*(0|t)$ and repeat the process at the next time instant.
To predict the future thermal behavior at each time step, we consider the multistep prediction model,
\begin{equation}
    \label{eq:cons1}
    y_f^t = \hat{H} 
    \begin{bmatrix}
        z_p^t \\
        u_f^t \\
        d_f^t
    \end{bmatrix}.
\end{equation}

To ensure passenger comfort, we restrict the room temperature $T_{room}$ in the three decks within $\pm T_{max}$ around the rule-based reference temperature $T_{rule}$ as follows,
\begin{equation}
    \label{eq:cons3}
    |y(k|t)-T_{rule}(k|t)| \leq T_{max} + \epsilon(k|t) \ \forall \ k \in \{1,\hdots,T\},
\end{equation}
where $\epsilon(k|t)$ are slack variables used to maintain the feasibility of the optimal control problem even when there is not input such that $|y(k|t)-T_{rule}(k|t)| \leq T_{max}$ and satisfy,
\begin{equation}
    \label{eq:cons4}
    \epsilon(k|t) \geq 0 \ \forall \ k \in \{1,\hdots,T\}.
\end{equation}
As shown later, the slack variables are penalized in the cost to ensure passenger comfort (i.e. $|y(k|t)-T_{rule}(k|t)| \leq T_{max}$), when possible.
To match the constraints also imposed on the driver, we restrict the reference temperature $T_{ref}$ within $\pm 2 \degree C$ around the rule-based reference temperature $T_{rule}$,
\begin{equation}
    \label{eq:cons5}
     |u(k|t)-T_{rule}(k|t)| \leq 2 \ \forall \ k \in \{1,\hdots,T\}.
\end{equation}
In addition to adhering to the restrictions of the current system, together with \eqref{eq:cons3} and \eqref{eq:cons4}, this constraint also helps limit the error between the reference temperature $T_{ref}$ and the room temperature $T_{room}$. 
This helps to ensure that the output of the low-level tracking controller included in the HVAC System and Control block in Figure \ref{fig:topology},
which depends on this error, is kept within an acceptable range. In other words, this constraint indirectly implies an input constraint on the maximum allowable input heat flow rate $\dot{Q}_{hvac}$.
Finally, we restrict the change in the reference temperature $T_{ref}$ between two consecutive time steps to be at most $\Delta T_{max}$,
\begin{equation}
    \label{eq:cons6}
    \begin{aligned}
        & |u(0|t)-u(t-1)| \leq \Delta T_{max}, \\
        |u(k|t) - &u(k-1|t)| \leq \Delta T_{max}  \ \forall \ k \in \{1,\hdots,T\}.
    \end{aligned}
\end{equation}
Constraint \eqref{eq:cons6} contributes to passenger comfort by keeping an upper bound on the rate of change of the reference temperature. Moreover, it also implicitly limits the change in the input heat flow rate produced by the HVAC system.

The cost function we employ is the weighted sum of four main components. The first aims to minimize the room temperature constraint violations and takes the form,
\begin{equation*}
    J_{\epsilon} = \sum_{k=0}^T \|\epsilon(k|t)\|_2^2
\end{equation*}
To further improve the passenger comfort, we also minimize the rate of change of the reference temperature by considering the cost function,
\begin{equation*}
    J_{\Delta u} = \|u(0|t)-u(t-1)\|_2^2 + \sum_{k=1}^T \|u(k|t)-u(k-1|t)\|_2^2.
\end{equation*}

The remaining two terms of the cost function aim to minimize the energy consumption of the HVAC system. Since our controller has no access to the heat flow rates $\dot{Q}_{hvac}$, the energy consumption cannot be calculated or predicted directly. Instead we use surrogate terms that involve variables that relate to energy consumption and are available to our controller. Intuitively energy consumption is reduced if we keep the reference close to the boundary of the allowable temperatures, the upper boundary if cooling and the lower boundary when heating. Since the internal HVAC state is also not available to our controller, we use the simulation model outlined in Section \ref{sec:II} to determine which of the two boundaries the controller should be aiming for at different times over the prediction horizon.
The selection of the optimal boundary is determined using the simulator described in Section \ref{sec:II} by simulating the open-loop thermal dynamics of the coach, assuming the HVAC control block is inactive (i.e. $\dot{Q}_{hvac}=0$). The resulting open-loop temperature trajectory is then projected onto the admissible temperature interval $[T_{rule}(k|t)-T_{max}(k|t),T_{rule}(k|t)+T_{max}(k|t)]$ yielding the reference temperature $T_{opt}(k|t)$ for all $k \in \{0,\hdots,T\}$. The third term in the cost function then becomes,
\begin{equation*}
    J_{u} = \sum_{k=0}^T \|u(k|t)-T_{opt}(k|t)\|_2^2.
\end{equation*}
Including this term in the cost function allows us to select a reference trajectory that respects comfort and other constraints by achieving a trade-off with the other terms included in the cost; this would not be the case with a trivial controller that sets the reference temperature $T_{ref}(k|t)$ to the computed optimal temperature $T_{opt}(k|t)$.
Finally, we also include a term penalizing the deviation of the room temperature from the optimal bound,
\begin{equation*}
    J_{y} = \sum_{k=0}^T \|y(k|t)-T_{opt}(k|t) \textbf{1}_3\|_2^2.
\end{equation*}
where $\textbf{1}_3$ is a vector of ones of size $3 \times 1$. 

The following theorem highlights an immediate consequence of incorporating the term $J_{y}$ in the cost function.
\begin{theorem}
    Define,
    \begin{equation*}
        T_{avg}(k|t) = \frac{1}{3} \left(T_{room}^{up}(k|t)+T_{room}^{mid}(k|t)+T_{room}^{low}(k|t) \right).
    \end{equation*}
    Then, the term $J_y$ is equivalent to,
    \begin{equation}
        \label{eq:thm2}
        \sum_{k=0}^T 
        \left( 
        \begin{aligned}
            \|T_{avg}(k|t)-T_{opt}&(k|t)\|_2^2 \\
            + \| T_{room}^{up}(k|t) &- T_{avg}(k|t) \|_2^2 \\
            + \| T_{room}^{mid}&(k|t) - T_{avg}(k|t) \|_2^2 \\
            &+ \| T_{room}^{low}(k|t) - T_{avg}(k|t) \|_2^2
        \end{aligned} 
        \right).
    \end{equation}
\end{theorem}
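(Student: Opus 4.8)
The plan is to prove the equivalence by an elementary componentwise expansion of the squared norm in $J_y$, combined with an add-and-subtract-the-mean (ANOVA-style) decomposition that exploits the fact that the three deck temperatures' deviations from their average sum to zero.

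First I would unfold the vector norm. Since $T_{opt}(k|t)$ is a scalar and $\mathbf{1}_3$ is the all-ones vector, for each fixed $k$ the summand of $J_y$ is $\|y(k|t)-T_{opt}(k|t)\mathbf{1}_3\|_2^2 = \sum_{d} \left(T_{room}^d(k|t)-T_{opt}(k|t)\right)^2$, where the sum runs over $d \in \{up,\ mid,\ low\}$. This reduces the claim to a purely scalar identity that must hold termwise in $k$, so it suffices to establish it for a single $k$ and then sum over $k \in \{0,\hdots,T\}$.

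Next, for each deck $d$ I would insert $T_{avg}(k|t)$ by writing $T_{room}^d - T_{opt} = (T_{room}^d - T_{avg}) + (T_{avg} - T_{opt})$ and expanding the square. Summing over the three decks produces three groups of terms: the squared deviations $\sum_d (T_{room}^d - T_{avg})^2$, a cross term proportional to $(T_{avg}-T_{opt})\sum_d (T_{room}^d - T_{avg})$, and a multiple of $(T_{avg}-T_{opt})^2$. The crux of the argument is the cross term: by the definition of $T_{avg}(k|t)$ as the arithmetic mean of the three deck temperatures, $\sum_d (T_{room}^d - T_{avg}) = 0$, so the cross term vanishes identically. This cancellation is the only conceptually nontrivial step; what remains is collecting the surviving squared deviations into the last three terms of \eqref{eq:thm2}.

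The main obstacle I anticipate is tracking the constant multiplying $(T_{avg}-T_{opt})^2$. Because the mean is added and subtracted once for each of the three decks, this term is generated three times over, so before summing over $k$ to conclude I would verify carefully the precise coefficient on $\|T_{avg}(k|t)-T_{opt}(k|t)\|_2^2$ appearing in the target expression \eqref{eq:thm2}, as this is the one place where a bookkeeping error could slip in.
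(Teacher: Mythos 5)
Your plan is essentially the paper's own proof: the paper adds and subtracts $T_{avg}(k|t)\mathbf{1}_3$ inside the vector norm and kills the cross term via $(T_{room}(k|t)-T_{avg}(k|t)\mathbf{1}_3)^\top\mathbf{1}_3=0$, which is exactly your observation that the deck deviations from their mean sum to zero, written in vector rather than componentwise form; fixing one $k$ and summing at the end is likewise how the paper concludes.

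The one substantive remark concerns the obstacle you flagged, and you were right to flag it. Suppressing the argument $(k|t)$ and letting $d$ range over $\{up,\ mid,\ low\}$, the exact identity your expansion produces is
\begin{equation*}
\sum_{d}\bigl(T_{room}^{d}-T_{opt}\bigr)^2=\sum_{d}\bigl(T_{room}^{d}-T_{avg}\bigr)^2+3\bigl(T_{avg}-T_{opt}\bigr)^2,
\end{equation*}
i.e., the mean term is generated three times, whereas \eqref{eq:thm2} displays $\|T_{avg}(k|t)-T_{opt}(k|t)\|_2^2$ with unit coefficient. The paper's proof passes from $\|T_{avg}(k|t)\mathbf{1}_3-T_{opt}(k|t)\mathbf{1}_3\|_2^2$, which equals $3(T_{avg}(k|t)-T_{opt}(k|t))^2$, to the first term of \eqref{eq:thm2} without comment, so the printed statement is exact only if that term is read as the norm of the vector $(T_{avg}(k|t)-T_{opt}(k|t))\mathbf{1}_3$, or if a factor of $3$ is inserted. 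Executing your bookkeeping faithfully would surface this factor; since it scales a single nonnegative term, it does not change the qualitative conclusion drawn after the theorem (regulate $T_{avg}$ toward $T_{opt}$ while homogenizing the decks), but your version of the identity is the sharper one.
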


\begin{proof}
    Recall that $y(k|t) = T_{room}(k|t)$. Therefore,
    \begin{equation*}
        \begin{aligned}
            & \|y(k|t)-T_{opt}(k|t) \textbf{1}_3\|_2^2 \\
            =
            & \|T_{room}(k|t)-T_{avg}(k|t)\textbf{1}_3+T_{avg}(k|t)\textbf{1}_3-T_{opt}(k|t)\textbf{1}_3\|_2^2 \\
            =
            & \|T_{room}(k|t)-T_{avg}(k|t)\textbf{1}_3\|_2^2 + \|T_{avg}(k|t)\textbf{1}_3-T_{opt}(k|t)\textbf{1}_3\|_2^2 \\
            +
            & 2 (T_{room}(k|t)-T_{avg}(k|t)\textbf{1}_3)^\top (T_{avg}(k|t)\textbf{1}_3-T_{opt}(k|t)\textbf{1}_3)
        \end{aligned}
    \end{equation*}
    Note that,
    \begin{equation*}
        \begin{aligned}
            &(T_{room}(k|t)-T_{avg}(k|t)\textbf{1}_3)^\top (T_{avg}(k|t)\textbf{1}_3-T_{opt}(k|t)\textbf{1}_3) \\
            =
            &(T_{avg}(k|t)-T_{opt}(k|t))(T_{room}(k|t)-T_{avg}(k|t)\textbf{1}_3)^\top \textbf{1}_3 \\
        \end{aligned}
    \end{equation*}
    Since $T_{room}(k|t)=[T_{room}^{up}(k|t) \ T_{room}^{mid}(k|t) \ T_{room}^{low}(k|t)]^\top$, then,
    \begin{equation*}
        \begin{aligned}
            &(T_{room}(k|t)-T_{avg}(k|t)\textbf{1}_3)^\top \textbf{1}_3 \\
            =
            &(T_{room}^{up}(k|t)+T_{room}^{mid}(k|t)+T_{room}^{low}(k|t)-3T_{avg}(k|t))^\top \\
            =&0.
        \end{aligned}
    \end{equation*}
    Since the above arguments hold for all $k \in \{0,\hdots,T\}$, then $J_y$ is equivalent to \eqref{eq:thm2}.
\end{proof}
The term $J_y$ therefore promotes thermal homogeneity throughout the coach. Specifically, it seeks to regulate the average temperature $T_{avg}$ towards the optimal temperature $T_{opt}$, while simultaneously minimizing the spatial variation of the room temperature $T_{room}$ across the decks. 

The resulting cost function of the optimal control problem is given by,
\begin{equation}
    J = J_y + \sigma J_u + \tau J_\epsilon + \gamma J_{\Delta u},
\end{equation}
where $\sigma$, $\tau$ and $\gamma$ are the cost function weights. 

The resulting optimal control problem is given by,
\begin{equation}
    \label{eq:OCP}
        \min J
        \text{ s.t. } \eqref{eq:cons1}, \ \eqref{eq:cons3}, \ \eqref{eq:cons4}, \ \eqref{eq:cons5}, \ \eqref{eq:cons6},
\end{equation}
with $y(k|t)$, $u(k|t)$ and $\epsilon(k|t)$ for all $k \in \{0,\hdots,T\}$ being the decision variables.

\section{Simulations}
\label{sec:IV}

In this section, we evaluate the performance of the multistep prediction model as well as the proposed DDPC architecture in simulation using the simulator described in Section \ref{sec:II}.

\subsection{Model Evaluation}

We begin by analyzing the sensitivity of the mutlistep prediction model to different parameters such as the sampling time, the number of lead-in data, $\rho$, and the prediction horizon, $T$. The performance of the model is assessed using the mean absolute error (MAE) between the actual room temperature and the predicted room temperature of a validation dataset. As shown in Figure \ref{fig:ModelEvaluation}, the model generally performs well with a mean absolute prediction error of a maximum of $0.25 \degree C$ over 60 minutes.

First, we examine the effect of the sampling time. We try three different sampling times of $1$, $5$ and $10$ minutes using the filtered data computed as discussed in Sections III.A and III.B. In all cases, we fix the lead-in time to 60 minutes, the future prediction time to 30 minutes and use the data of 9 coaches for training (six from the Winter/Spring and three from the Summer) and the data of the other three coaches for validation (two from the Winter/Spring and one from the Summer). As shown in the left panel in Figure \ref{fig:ModelEvaluation}, the prediction accuracy improves as the sampling time gets smaller. Note, however, that implementing the controller using a sampling time of $1$ minute leads to extensive communication and computation as this requires updating all weather, train and HVAC data involved in $z_p^t$ and $d_f^t$ every $1$ minute for all controlled coaches. Hence, implementing the controller with a sampling time of 1 minute might not be feasible with the low computing power of the HVAC systems, potentially hampering the widespread adoption of the proposed scheme.

To explore the effect of the number of lead-in data $\rho$ on the prediction accuracy, we fix the prediction time to 30 minutes, the sampling time to 5 minutes and keep the same division of the training and validation dataset as before. As shown in the middle panel of Figure \ref{fig:ModelEvaluation}, increasing the number of lead-in data enhances the prediction accuracy as shown in the middle plot in Figure \ref{fig:ModelEvaluation}. While the performance significantly improves from $\rho = 6$ to $\rho = 12$, the improvement is smaller as $\rho$ exceeds $12$.

Finally, to examine the effect of the prediction horizon $T$, we fix the number of lead-in data to $\rho = 6$, the sampling time to 5 minutes and keep the same division of the training and validation dataset as before. As shown in the right panel of Figure \ref{fig:ModelEvaluation}, the prediction accuracy does not significantly change at the same prediction step as the prediction horizon changes. This is due to the structure of the developed data-driven multistep prediction model.

\begin{figure*}[ht]
	\centering
    \includegraphics[scale=0.55]{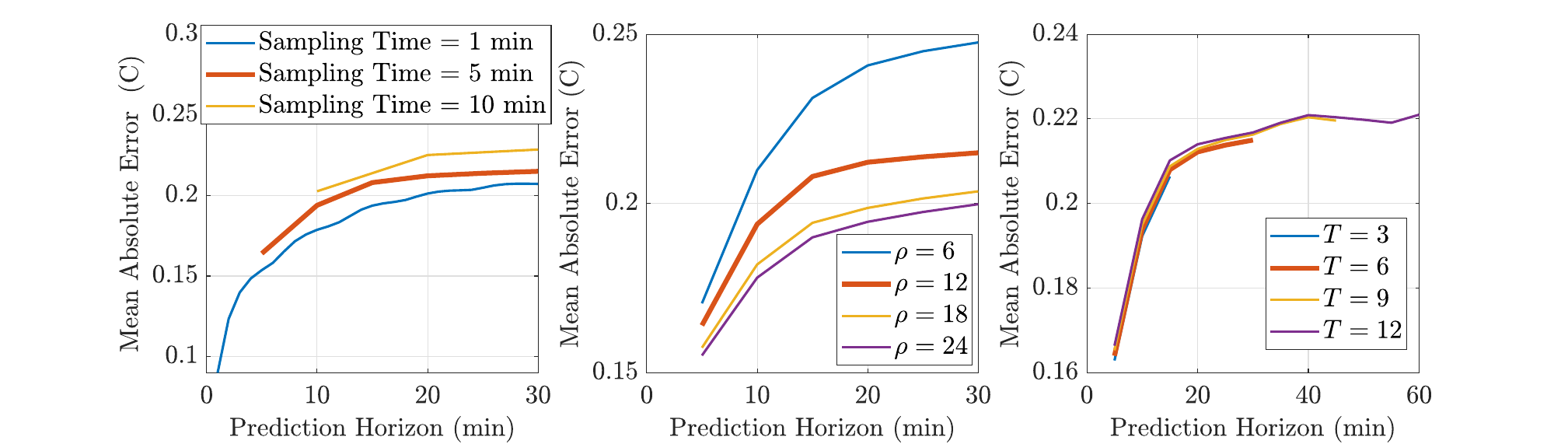}
	\caption{Data driven model evaluation (Note: The thick red line in each plot refers to the set of parameters that we will use in the proposed controller).}
	\label{fig:ModelEvaluation}
\end{figure*}

\subsection{Control Evaluation}

We now evaluate the closed-loop performance of the proposed DDPC architecture in simulation. 
We implement the controller with a sampling time of $5$ minutes and a future prediction time of $30$ minutes and use a lead-in time of $60$ minutes in the multistep prediction model. We solve the optimal control problem using MATLAB with MOSEK and YALMIP \cite{Lofberg2004}. The DDPC layer is connected in the feedback loop as shown in Figure \ref{fig:controlSystem}, with the simulator outlined in Section \ref{sec:II} replacing the ``Rule-based Control", ``HVAC System and Control" and ``Thermal Dynamics" blocks. For the external disturbances (including $T_{amb}$), we use the data of one of the coaches collected on March $28$, 2022 and March $29$, 2022. 

The closed-loop thermal behavior of the coach is shown in Figure \ref{fig:ControlEvaluation}.
The train starts its operation slightly after 04:00 on March 28 and keeps operating until around 01:00 on March 29. Since we need data for one hour to be able to use the Multistep Prediction Model, we deactivate the DDPC in the beginning of the simulation and set $T_{ref} = T_{rule}$ as shown in Figure \ref{fig:topology}.
Once the DDPC is activated at 5:15, the reference temperature moves towards the lower bound as the train needs to be heated due to the low ambient temperature and the absence of solar radiation. Around noon, the ambient temperature and solar radiation increase; as a consequence, the DDPC moves the reference temperature towards the upper bound. Finally, near the end of the day, the reference temperature moves back again to the lower bound due to the sunset and the decrease of the ambient temperature. Note that the room temperature is satisfactorily tracking the reference temperature with negligible constraint violations. If necessary, these violations can be reduced  further or eliminated by choosing slightly conservative bounds or modifying the relative weights in the cost function. 

\begin{figure*}[ht]
	\centering
    \includegraphics[scale=0.5]{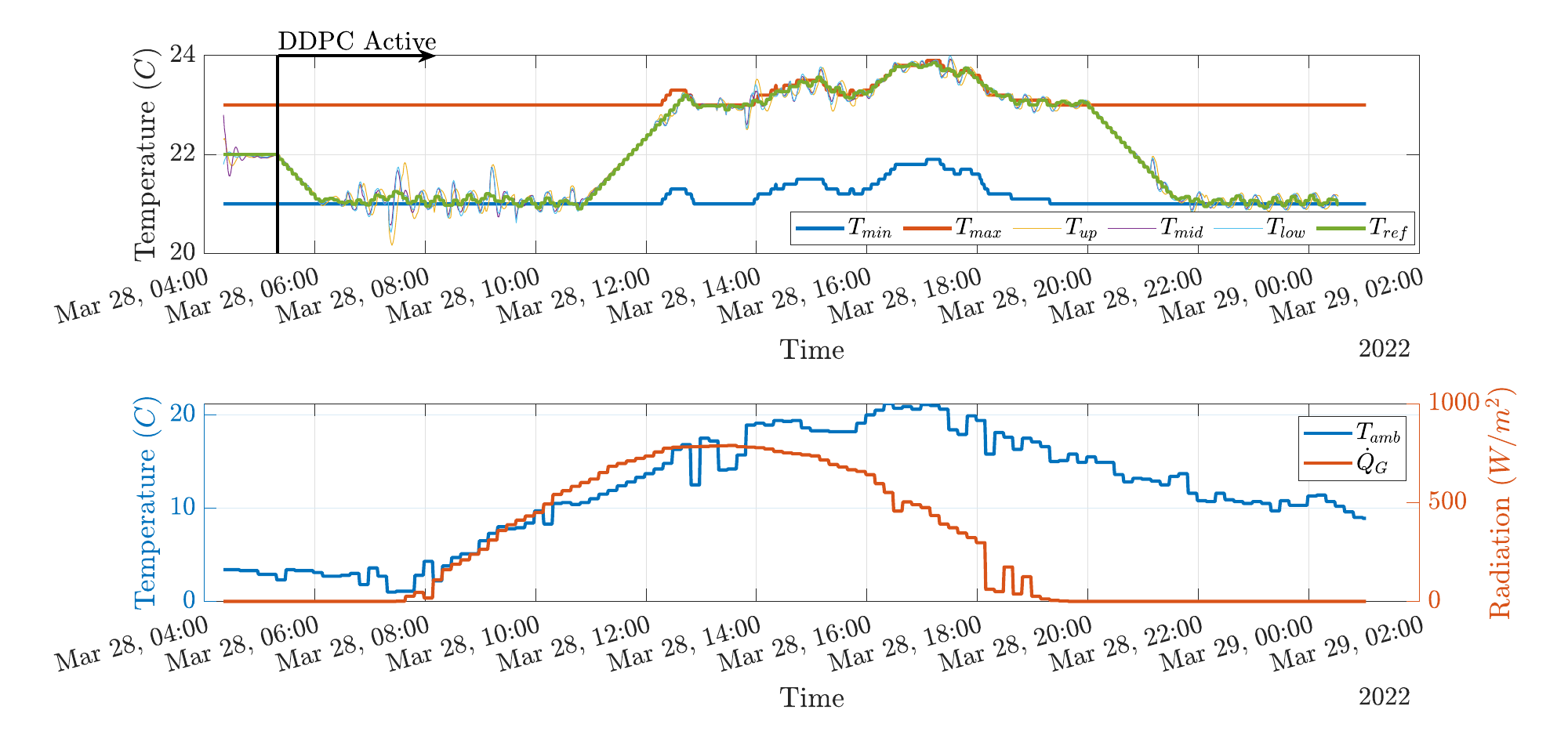}
	\caption{DDPC evaluation in simulation: The DDPC scheme moves towards the lower bound in the beginning and end of the day and towards the upper bound in the middle of the day depending on the weather conditions. Before activating the DDPC, the controller is following the midpoint.}
	\label{fig:ControlEvaluation}
\end{figure*}

\section{Experiments}
\label{sec:V}

To assess the energy efficiency of the developed DDPC scheme in practice, we performed multiple experiments on real world trains under different conditions. In each experiment, we selected two coaches in the same train. In one coach, we implemented the proposed controller (as in Figure \ref{fig:controlSystem}), while in the other, we simply set $T_{ref} = T_{rule}$ (as in Figure \ref{fig:topology}). 
We refer to the two cases as ``Activated DDPC" and ``Deactivated DDPC". 
For safety reasons, the controllers are only compared on empty non-moving trains in an outdoor setting. The experiments were conducted during the scheduled breaks of the trains. The room temperatures inside the train were measured using onboard built-in sensors, whereas the power consumption of the HVAC systems is measured using external power data loggers. The weather data for the train's location is provided in real time by SBB through Meteomatics. The optimal control problem is solved using MATLAB with MOSEK and YALMIP \cite{Lofberg2004}.

We show the results of our first experiment conducted on train 124 in Muelligen on October 9, 2024 in Figure \ref{fig:Exp1}. The performance of the coach with the ``Deactivated DDPC" is shown in the top plot. Note that the reference temperature is always in the middle between the upper and lower bounds. Notice also that, around 15:00, the temperature of the upper deck fails to follow the reference temperature. We conjecture that this is due to an internal constraint inside the HVAC system (possibly a limit on the number of times the cooling system can be switched on). Connection is lost to the HVAC system of the ``Deactivated DDPC" around 13:45. The performance of the coach with the ``Activated DDPC" is shown in the second plot; the DDPC is activated at 13:15. Although the ambient temperature and solar radiation are moderate throughout the day (see the third plot), the proposed controller drives the reference temperature towards the upper bound for saving energy. Notice that connection is lost to the HVAC system around 17:50 during this experiment. 

The energy consumption of both controllers is shown in the fourth plot; note that whenever connection to either HVAC system is lost, we set the power consumption to zero.
The coach with the DDPC activated consumes less energy in comparison to the other coach. The bar plot shows the percentage of energy savings every 30 minutes; we start computing the percentage of energy savings starting from 14:00 when the ``Activated DDPC" reaches its steady state.
Note that the ``Activated DDPC" consumes less energy not only because the temperature of the upper deck in the deactivated DDPC coach fails to follow the reference temperature. Indeed, the ``Activated DDPC" is consuming less energy throughout the experiment even when the temperature of the upper deck with the deactivated DDPC is satisfactorily following the reference temperature at the end of the experiment.

\begin{figure*}[ht]
	\centering
    \includegraphics[scale=0.5]{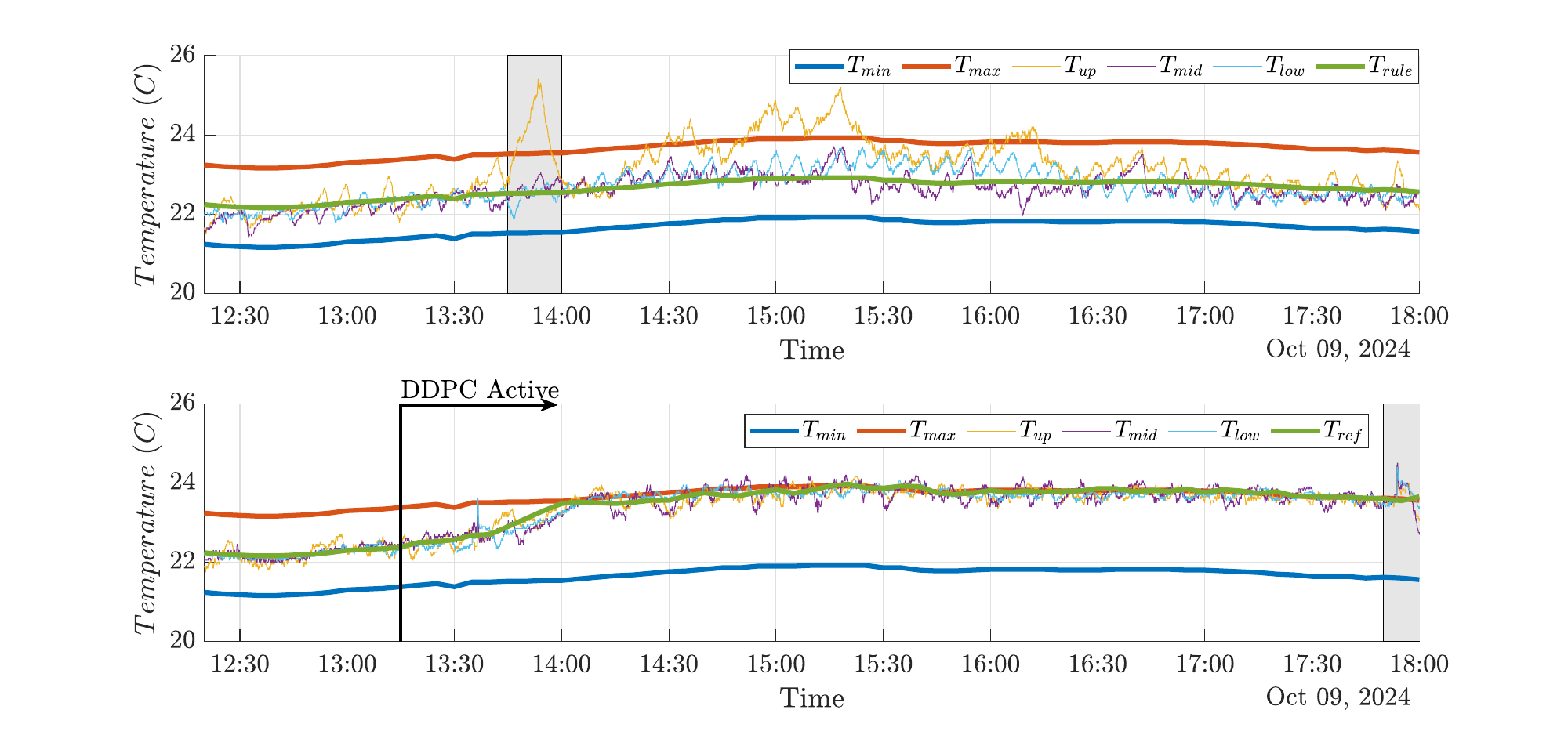}
    \includegraphics[scale=0.5]{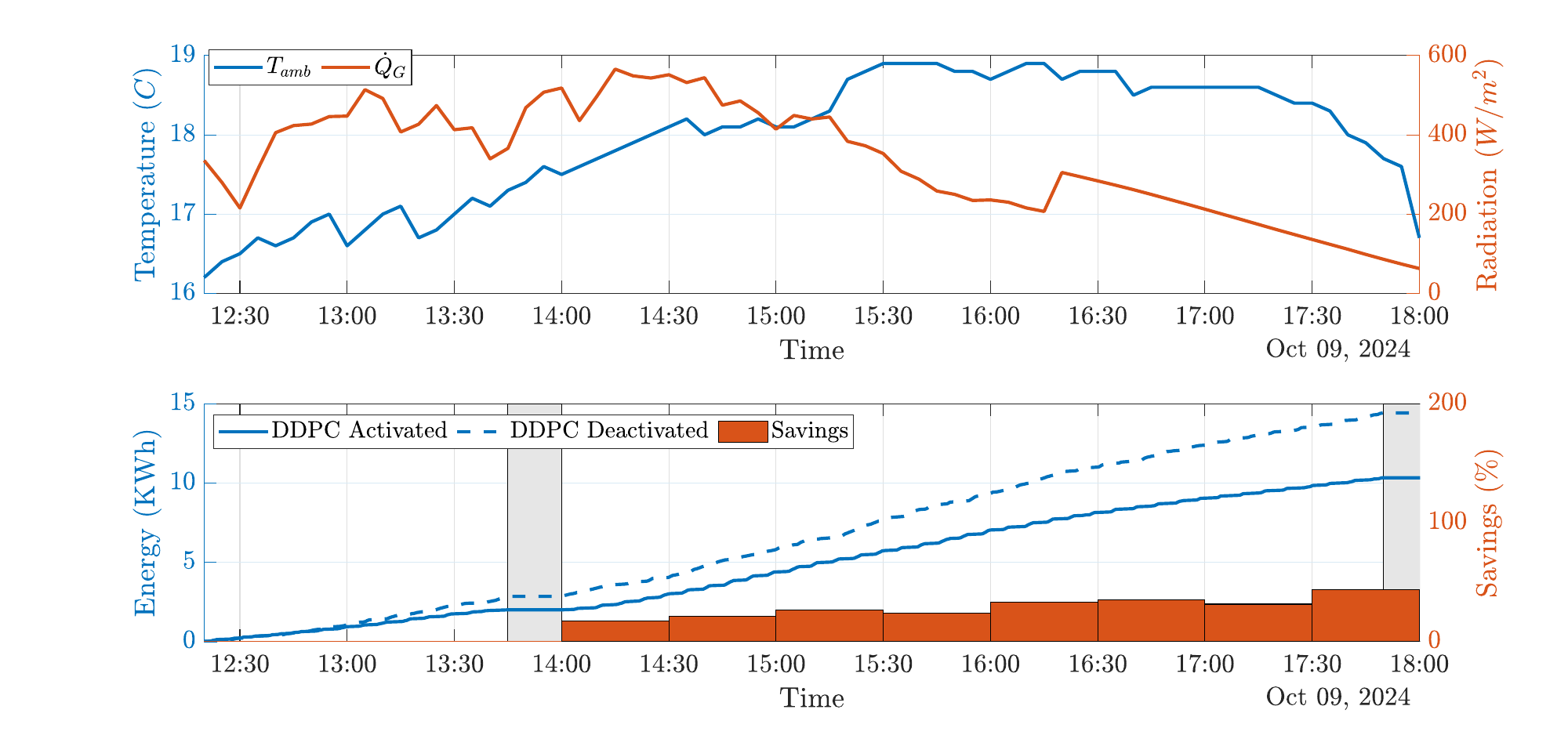}
	\caption{Cooling Experiment 1: The ``Activated DDPC" scheme aims to minimize the energy consumption while respecting the temperature bounds by moving towards the upper bound unlike the "Deactivated DDPC" scheme.}
	\label{fig:Exp1}
\end{figure*}

We show the results of our second experiment conducted on train 44 in Wiesendangen on August 15, 2024 in Figure \ref{fig:Exp2}. In this experiment, we put more emphasis on constraint violations by choosing a more conservative upper bound in the optimal control problem of the proposed controller. Note that we still show the actual bounds (i.e. $\pm 1 \degree C$) suggested by SBB  in Figure \ref{fig:Exp2}. Similar to the first experiment, the ``Deactivated DDPC" is always following the midpoint between the actual upper and lower bounds (first plot). Due to the high ambient temperature and solar radiation (third plot), the ``Activated DDPC" drives the reference temperature towards the upper bound (second plot). Note, however, that the upper bound is not reached due to the conservative constraint used by the DDPC scheme. The coach with the ``Activated DDPC" is again consuming less energy than the one with the ``Deactivated DDPC" (fourth plot).

\begin{figure*}[ht]
	\centering
    \includegraphics[scale=0.5]{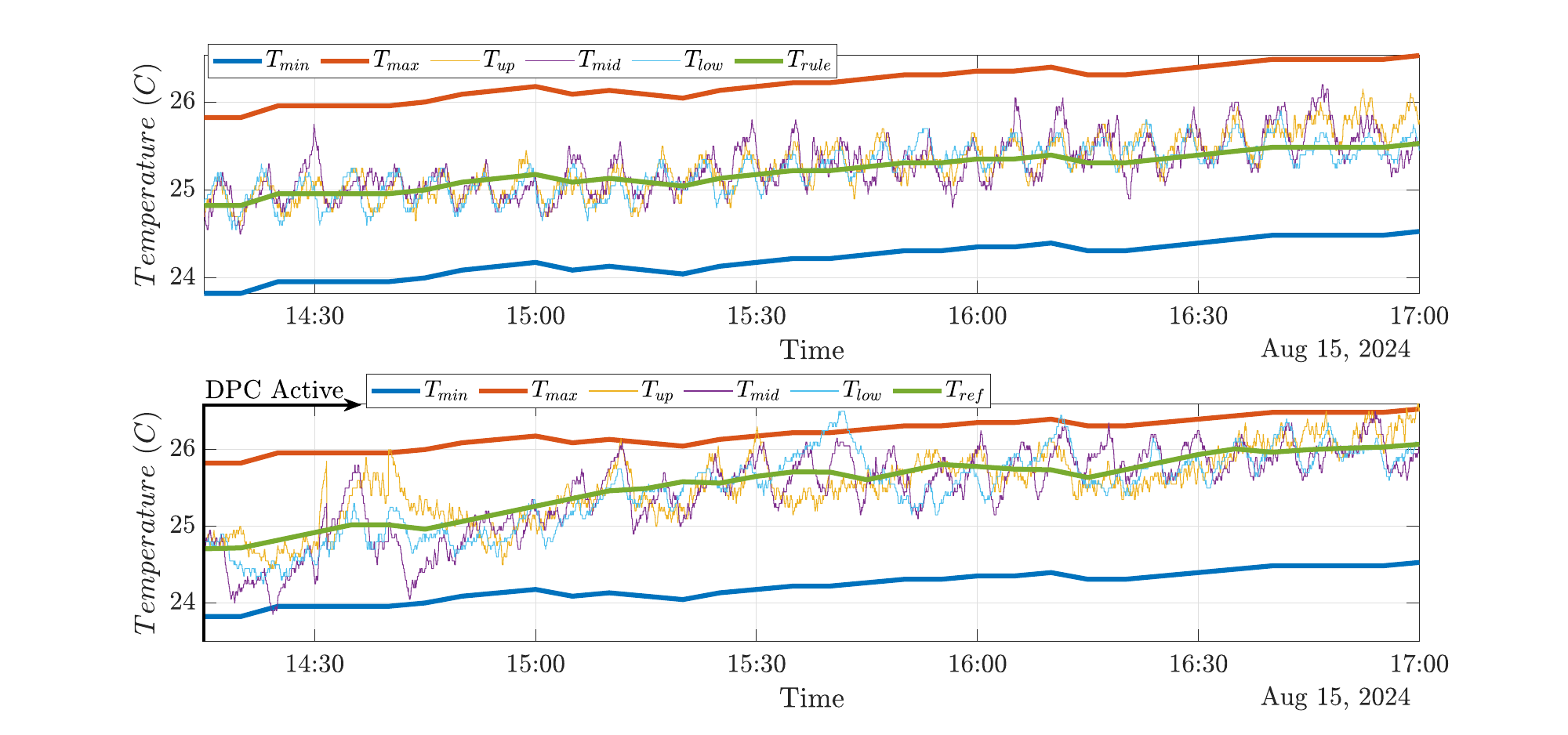}
    \includegraphics[scale=0.5]{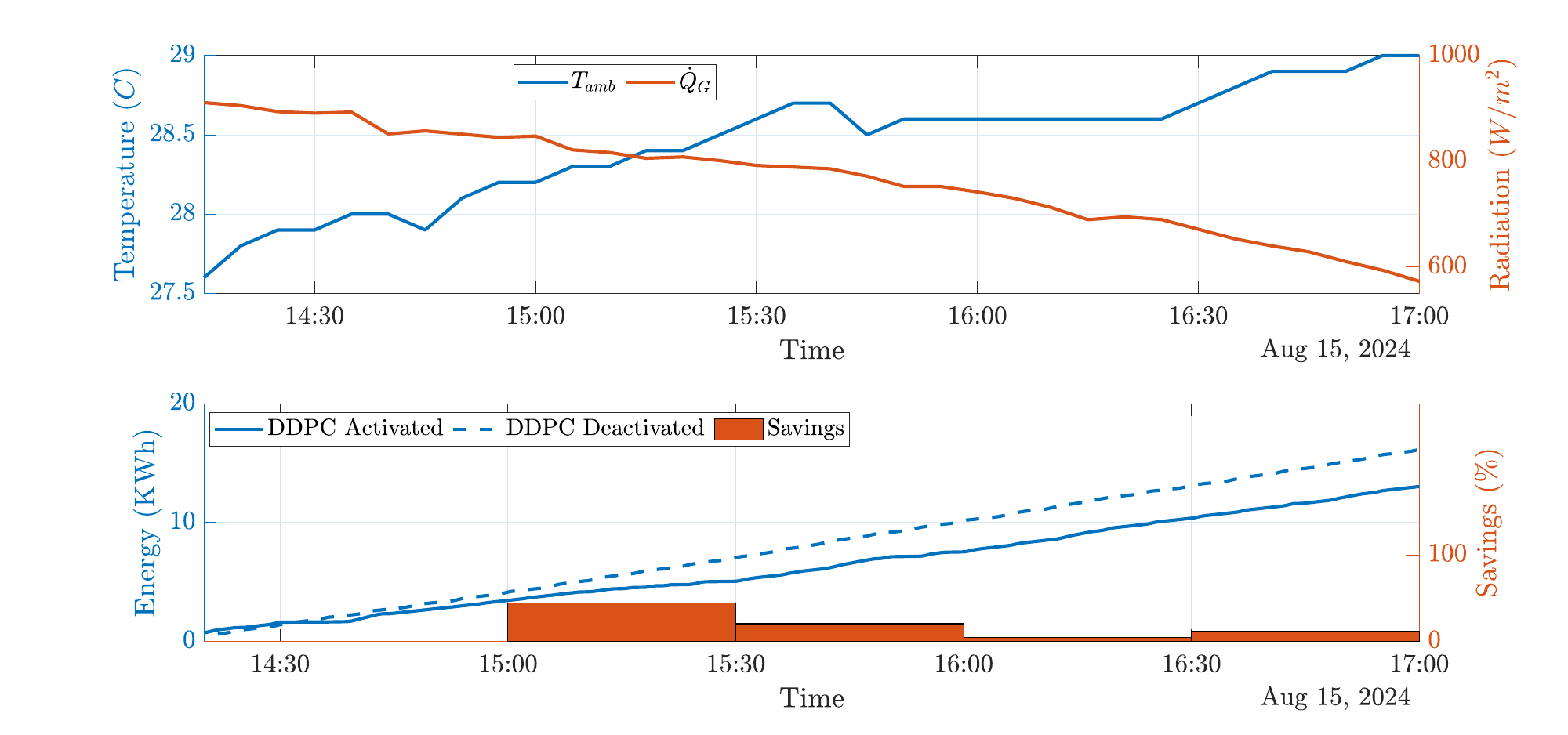}
	\caption{Cooling Experiment 2: The ``Activated DDPC" scheme consumes less energy in comparison to the “Deactivated DDPC” scheme as it approaches the actual upper bound without reaching it due to using a smaller upper bound in the optimal control problem.}
	\label{fig:Exp2}
\end{figure*}

We show the results of our third experiment performed on train 124 in Muelligen on August 22, 2024 in Figure \ref{fig:Exp3}. In this experiment, we choose the weights of the cost function to put more emphasis on energy savings. As shown in the first plot, the ``Deactivated DDPC" is again following the middle point between the bounds. Due to the high ambient temperature and solar radiation (third plot) and the increased emphasis on energy savings, the reference temperature of the ``Activated DDPC" follows the upper bound despite the significant constraint violations as shown in the second plot. Note that connection is lost to the HVAC systems of the ``Activated DDPC" around 15:15. Similar to the previous experiments, the coach with the ``Activated DDPC" is consuming less energy than the one with the ``Deactivated DDPC" (fourth plot). 

\begin{figure*}[ht]
	\centering
    \includegraphics[scale=0.5]{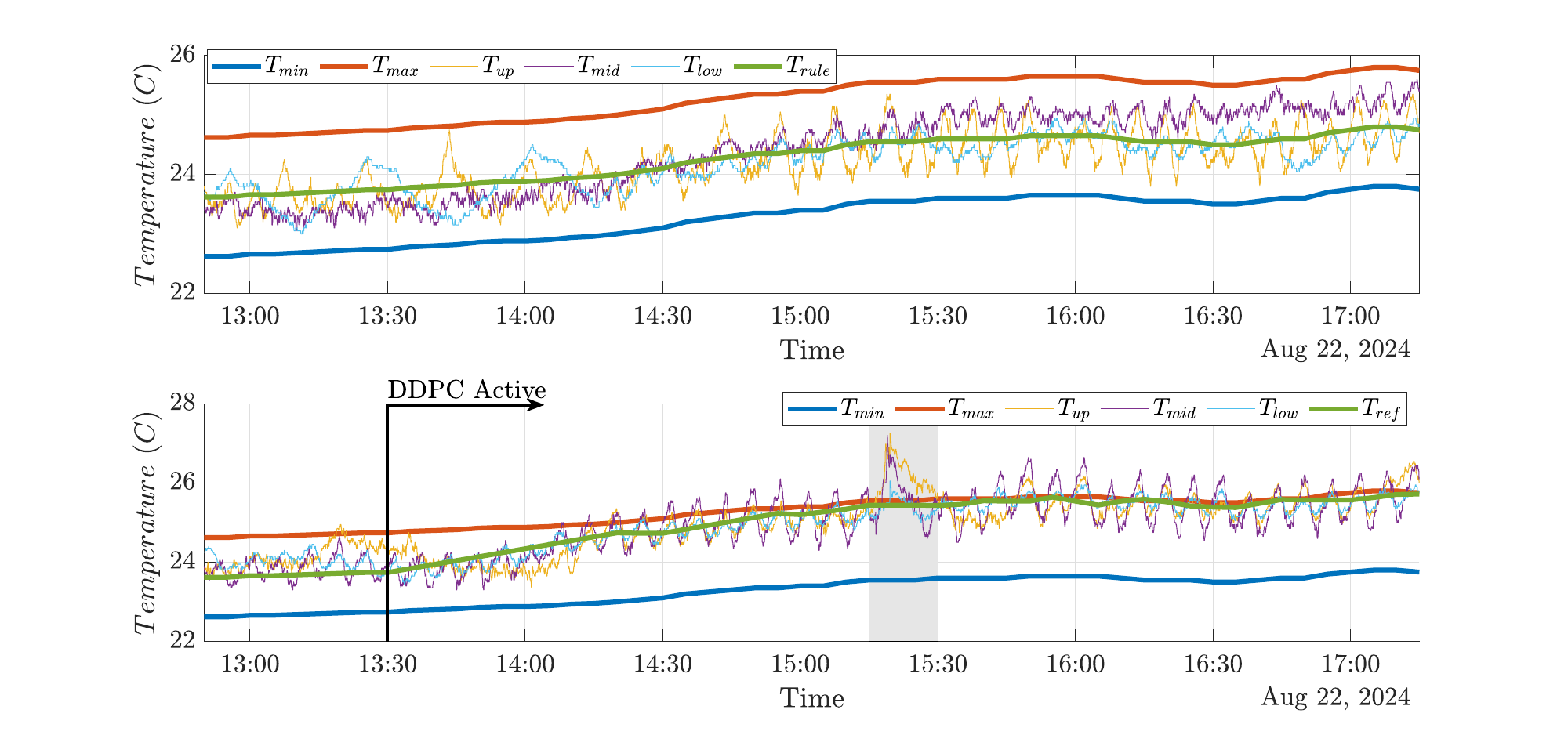}
    \includegraphics[scale=0.5]{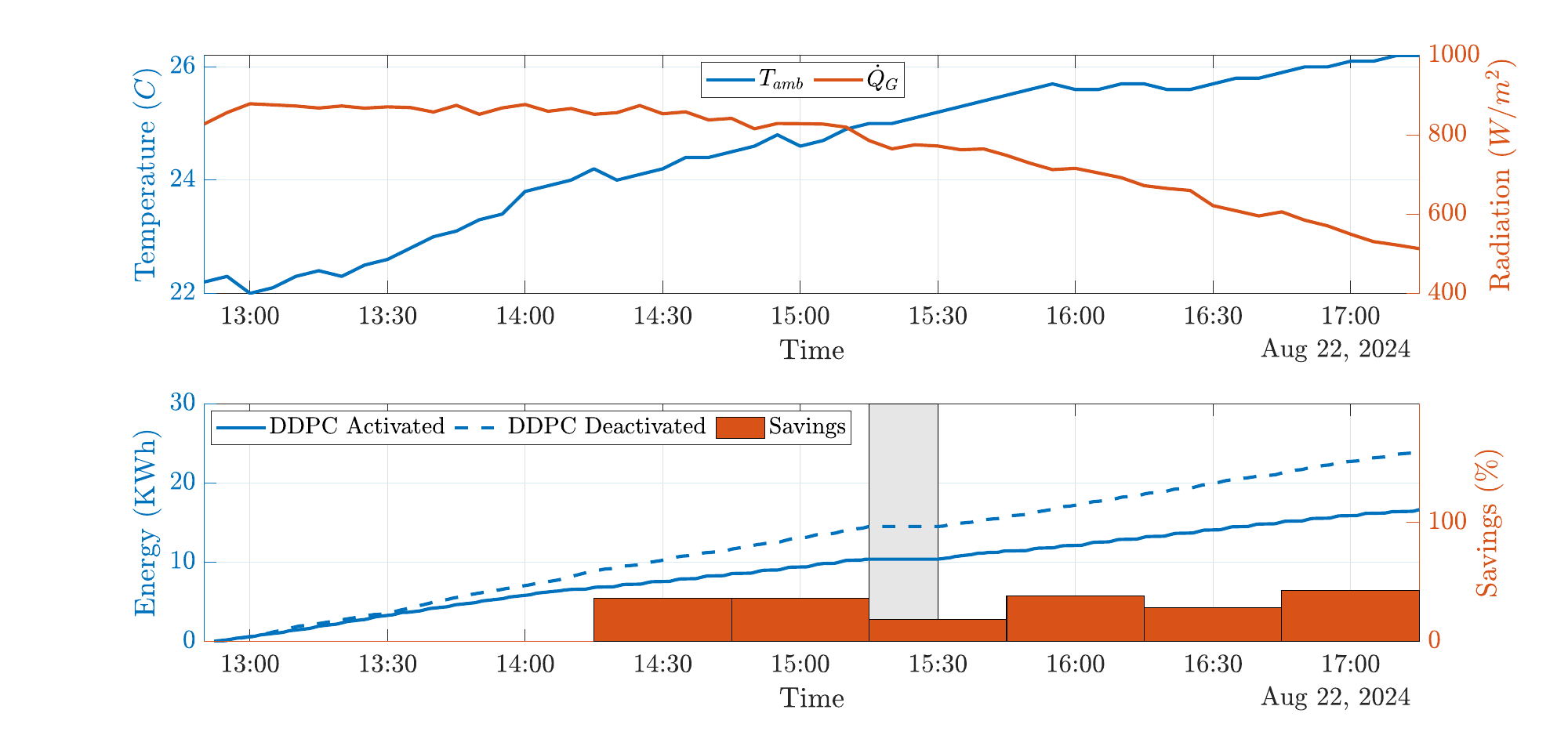}
	\caption{Cooling Experiment 3: The ``Activated DDPC" scheme minimizes the energy consumption by moving towards the upper bound and violating the temperature constraints due to the parameter selection of the optimal control problem, whereas the ``Deactivated DDPC" just follows the midpoint.}
	\label{fig:Exp3}
\end{figure*}

Finally, the results of the fourth experiment performed on train 66 in Zug on November 28, 2024 are shown in Figure \ref{fig:Exp4}. Similar to the other experiments, the ``Deactivated DDPC" is following the middle point between the bounds as shown in the first plot. On the contrary, the ``Activated DDPC" drives the reference temperature to the lower bound as shown in the second plot due to low ambient temperature and solar radiation (third plot). Note that a communication error occurred in the coach of the ``Activated DDPC" around 13:30 where the DDPC commands are overwritten over a period of approximately 30 minutes. Similar to the other experiments, the coach with the ``Activated DDPC" consumes less energy than the one with the ``Deactivated DDPC" (fourth plot).

\begin{figure*}[ht]
	\centering
    \includegraphics[scale=0.5]{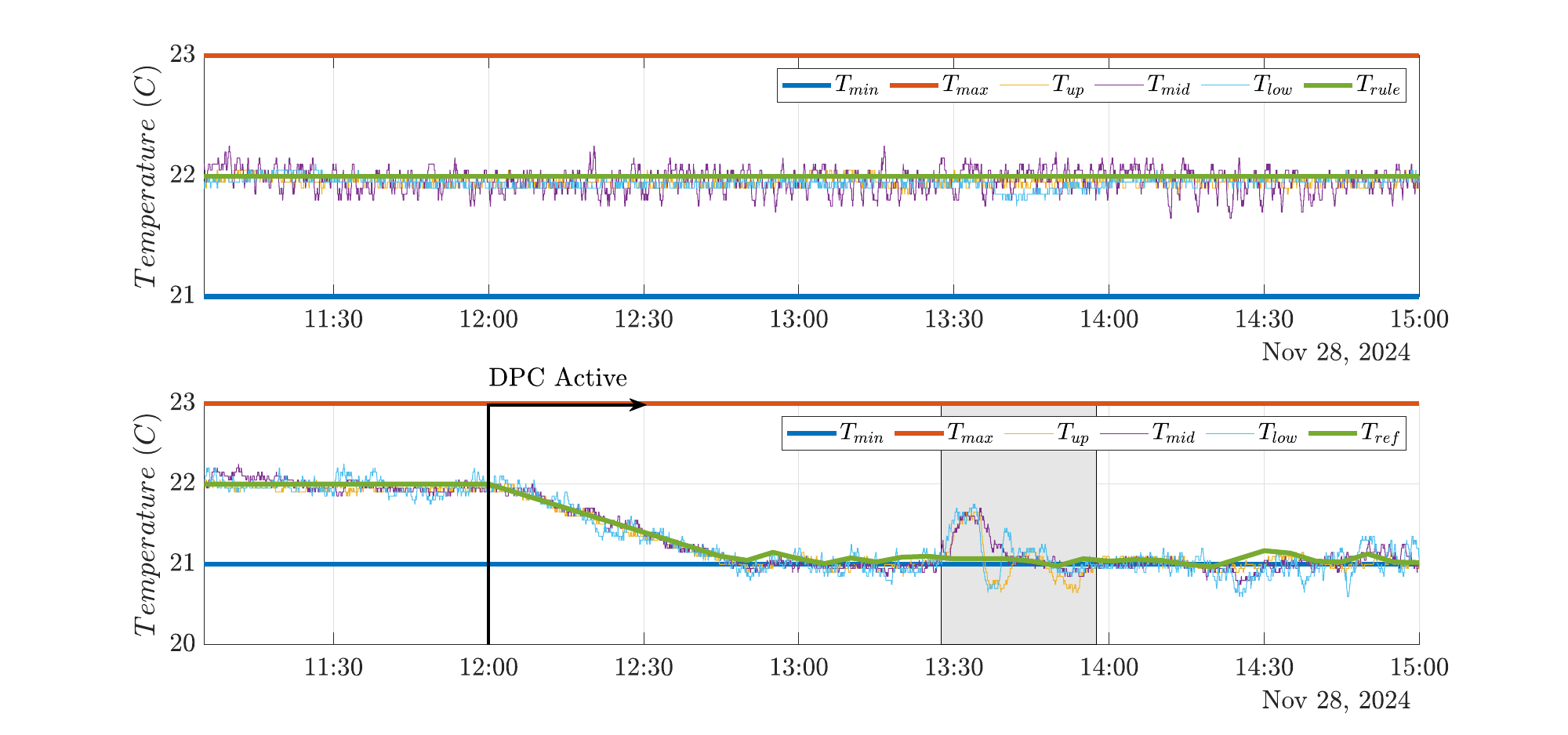}
    \includegraphics[scale=0.5]{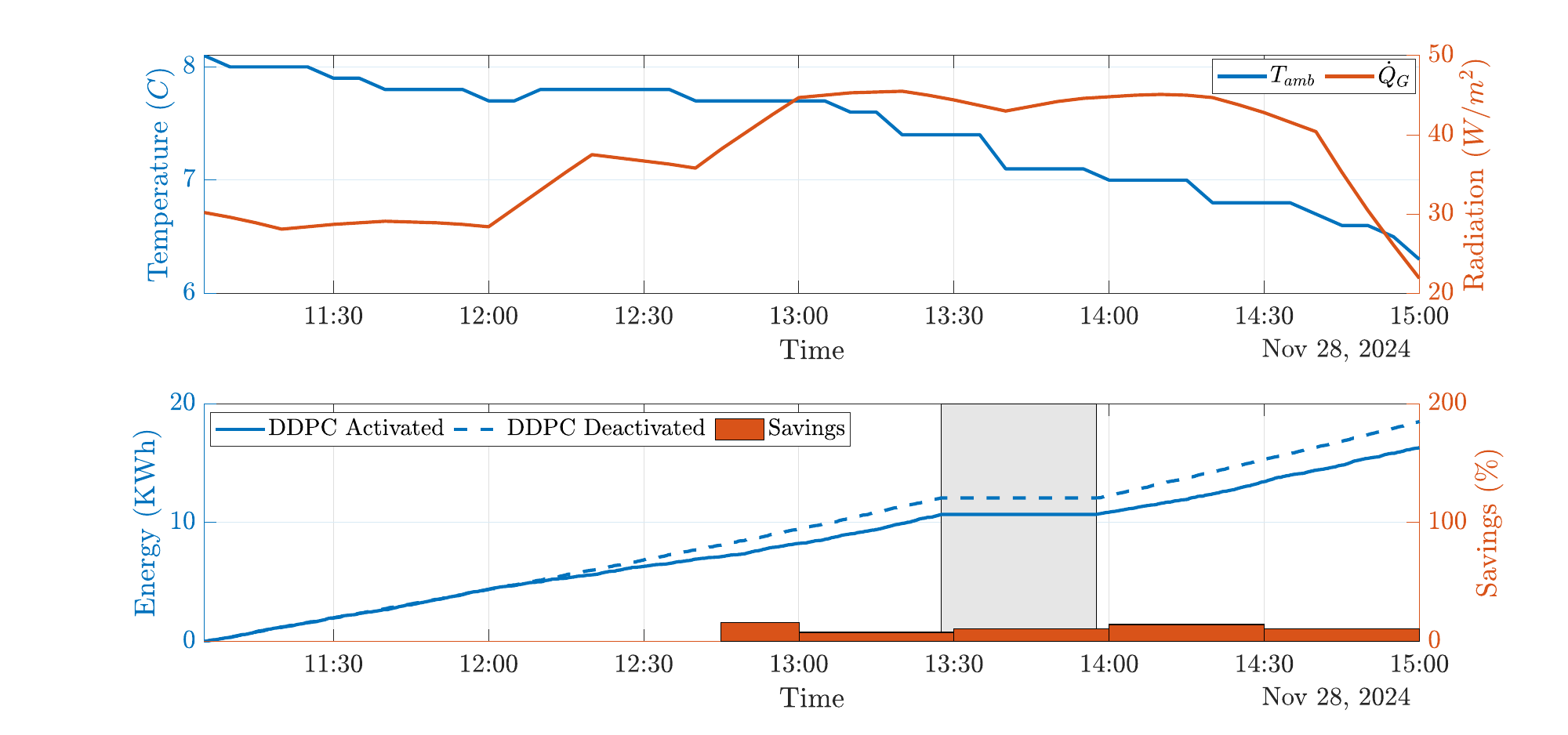}
	\caption{Heating Experiment 4: The ``Activated DDPC" scheme consumes less energy in comparison to the ``Deactivated DDPC" by moving towards the lower bound due to the winter weather conditions.}
	\label{fig:Exp4}
\end{figure*}

Finally, we present a summary of the results from the four experiments in Table \ref{tab:summary}. Specifically, we report the percentage difference in energy consumption between the coach with the "Activated DDPC" and the one with "Deactivated DDPC". This percentage is calculated for each experiment over the period when the "Activated DDPC" operates in steady state, excluding intervals affected by connection losses or communication errors. Table \ref{tab:summary} also reports the average hourly constraint violations. In the three cooling experiments (Experiments 1, 2 and 3), higher energy savings are associated with increased constraint violations. The heating experiment (Experiment 4), on the other hand, shows considerable energy savings with negligible constraint violations.

\begin{table}[]
    \centering
    \begin{tabular}{|c|c|c|c|c|}
        \hline
        Experiment & Exp. 1 & Exp. 2 & Exp. 3 & Exp. 4 \\
        \hline 
        Energy Savings ($\%$) & 28.03 & 20.11 & 34.94 & 11.65 \\
        \hline
        Average Hourly Violation ($K$) & 0.0173 & 0.0014 & 0.0609 & 0.0404 \\
        \hline
    \end{tabular}
    \caption{Summary of cooling Exp. 1, 2 \& 3 and heating Exp. 4}
    \label{tab:summary}
\end{table}

\section{Conclusion}
\label{sec:VI}

We propose a controller that aims to reduce the energy consumption of HVAC systems in trains while keeping the train temperature within prespecified acceptable bounds. Multiple experiments on separate coaches suggest that energy savings between $10\%$ and $30\%$ may be achievable with negligible constraint violations by appropriately tuning the control parameters. Future work includes deploying the developed DDPC architecture in more complex and dynamic scenarios, such as occupied trains. These scenarios require a more rigorous methodology to enable meaningful comparisons, accounting for inherent variability between coaches (e.g., differences in occupancy). Another direction is to evaluate the controller’s impact on various passenger comfort-related key performance indicators. 
Incorporating nonlinear prediction models—as done in related work on thermal energy management in buildings \cite{di2022physically}—is also promising. Finally, enhancing weather and occupancy forecasts, as well as exploring robust or stochastic formulations to handle forecast uncertainties, represents an important direction for future research. 

\section*{Acknowledgments}

We would like to thank Dr. Raffaele Soloperto and Francesco Micheli for the fruitful discussions on technical topics.  We are also grateful for Matthias Tuchschmid, Elodie Duliscouet, Michael Laszlo, Anina Döbeli, Francesco Scotto, Dr. Ralf Hofer, Dr. Achim Zanker, Stephan Geiger, Alex Müri, Sepp Stark, Moritz Kupper, Andreas Fuchs and Jan Gasser for their insightful support, valuable discussions and for providing data and information on the trains.

\bibliographystyle{ieeetr}
\bibliography{references}

\begin{thebibliography}{10}

\bibitem{afram2014theory}
A.~Afram and F.~Janabi-Sharifi, ``Theory and applications of hvac control systems--a review of model predictive control (mpc),'' {\em Building and Environment}, vol.~72, pp.~343--355, 2014.

\bibitem{abdelaziz2011review}
E.~Abdelaziz, R.~Saidur, and S.~Mekhilef, ``A review on energy saving strategies in industrial sector,'' {\em Renewable and sustainable energy reviews}, vol.~15, no.~1, pp.~150--168, 2011.

\bibitem{wang2011survey}
Y.~Wang, B.~Ning, F.~Cao, B.~De~Schutter, and T.~J. Van~den Boom, ``A survey on optimal trajectory planning for train operations,'' in {\em Proceedings of 2011 IEEE International Conference on Service Operations, Logistics and Informatics}, pp.~589--594, IEEE, 2011.

\bibitem{novak2018hierarchical}
H.~Novak, V.~Le{\v{s}}i{\'c}, and M.~Va{\v{s}}ak, ``Hierarchical model predictive control for coordinated electric railway traction system energy management,'' {\em IEEE Transactions on Intelligent Transportation Systems}, vol.~20, no.~7, pp.~2715--2727, 2018.

\bibitem{lu2013single}
S.~Lu, S.~Hillmansen, T.~K. Ho, and C.~Roberts, ``Single-train trajectory optimization,'' {\em IEEE Transactions on Intelligent Transportation Systems}, vol.~14, no.~2, pp.~743--750, 2013.

\bibitem{bai2014energy}
Y.~Bai, T.~K. Ho, B.~Mao, Y.~Ding, and S.~Chen, ``Energy-efficient locomotive operation for chinese mainline railways by fuzzy predictive control,'' {\em IEEE Transactions on Intelligent Transportation Systems}, vol.~15, no.~3, pp.~938--948, 2014.

\bibitem{wang2016optimal}
Y.~Wang, B.~Ning, T.~Van~den Boom, B.~De~Schutter, {\em et~al.}, {\em Optimal trajectory planning and train scheduling for urban rail transit systems}.
\newblock Springer, 2016.

\bibitem{yang2012cooperative}
X.~Yang, X.~Li, Z.~Gao, H.~Wang, and T.~Tang, ``A cooperative scheduling model for timetable optimization in subway systems,'' {\em IEEE Transactions on Intelligent Transportation Systems}, vol.~14, no.~1, pp.~438--447, 2012.

\bibitem{yang2015energy}
X.~Yang, A.~Chen, X.~Li, B.~Ning, and T.~Tang, ``An energy-efficient scheduling approach to improve the utilization of regenerative energy for metro systems,'' {\em Transportation Research Part C: Emerging Technologies}, vol.~57, pp.~13--29, 2015.

\bibitem{zhu2017hierarchical}
F.~Zhu, Z.~Yang, H.~Xia, and F.~Lin, ``Hierarchical control and full-range dynamic performance optimization of the supercapacitor energy storage system in urban railway,'' {\em IEEE Transactions on Industrial Electronics}, vol.~65, no.~8, pp.~6646--6656, 2017.

\bibitem{clerici2018multiport}
A.~Clerici, E.~Tironi, and F.~Castelli-Dezza, ``Multiport converters and ess on 3-kv dc railway lines: Case study for braking energy savings,'' {\em IEEE Transactions on Industry Applications}, vol.~54, no.~3, pp.~2740--2750, 2018.

\bibitem{vetterli2015energy}
N.~Vetterli, U.-P. MENTI, F.~SIDLER, E.~THALER, and G.~ZWEIFEL, ``Energy efficiency of railway vehicles,'' in {\em Proceedings of International Conference CISBAT 2015 Future Buildings and Districts Sustainability from Nano to Urban Scale}, pp.~955--960, LESO-PB, EPFL, 2015.

\bibitem{rawlings2017model}
J.~B. Rawlings, D.~Q. Mayne, M.~Diehl, {\em et~al.}, {\em Model predictive control: theory, computation, and design}, vol.~2.
\newblock Nob Hill Publishing Madison, WI, 2017.

\bibitem{kouvaritakis2016model}
B.~Kouvaritakis and M.~Cannon, ``Model predictive control,'' {\em Switzerland: Springer International Publishing}, vol.~38, pp.~13--56, 2016.

\bibitem{lee2011model}
J.~H. Lee, ``Model predictive control: Review of the three decades of development,'' {\em International Journal of Control, Automation and Systems}, vol.~9, pp.~415--424, 2011.

\bibitem{maasoumy2014handling}
M.~Maasoumy, M.~Razmara, M.~Shahbakhti, and A.~S. Vincentelli, ``Handling model uncertainty in model predictive control for energy efficient buildings,'' {\em Energy and Buildings}, vol.~77, pp.~377--392, 2014.

\bibitem{maasoumy2012optimal}
M.~Maasoumy and A.~Sangiovanni-Vincentelli, ``Optimal control of building {HVAC} systems in the presence of imperfect predictions,'' in {\em Dynamic Systems and Control Conference}, vol.~45301, pp.~257--266, American Society of Mechanical Engineers, 2012.

\bibitem{gao2023energy}
Y.~Gao, S.~Miyata, and Y.~Akashi, ``Energy saving and indoor temperature control for an office building using tube-based robust model predictive control,'' {\em Applied Energy}, vol.~341, p.~121106, 2023.

\bibitem{long2014scenario}
Y.~Long, S.~Liu, L.~Xie, and K.~H. Johansson, ``A scenario-based distributed stochastic mpc for building temperature regulation,'' in {\em 2014 IEEE international conference on automation science and engineering (CASE)}, pp.~1091--1096, IEEE, 2014.

\bibitem{zhang2014sample}
X.~Zhang, S.~Grammatico, G.~Schildbach, P.~Goulart, and J.~Lygeros, ``On the sample size of randomized mpc for chance-constrained systems with application to building climate control,'' in {\em 2014 european control conference (ECC)}, pp.~478--483, IEEE, 2014.

\bibitem{ma2014stochastic}
Y.~Ma, J.~Matu{\v{s}}ko, and F.~Borrelli, ``Stochastic model predictive control for building hvac systems: Complexity and conservatism,'' {\em IEEE Transactions on Control Systems Technology}, vol.~23, no.~1, pp.~101--116, 2014.

\bibitem{oldewurtel2013stochastic}
F.~Oldewurtel, C.~N. Jones, A.~Parisio, and M.~Morari, ``Stochastic model predictive control for building climate control,'' {\em IEEE Transactions on Control Systems Technology}, vol.~22, no.~3, pp.~1198--1205, 2013.

\bibitem{herzog2013self}
S.~Herzog, D.~Atabay, J.~Jungwirth, and V.~Mikulovic, ``Self-adapting building models for model predictive control,'' in {\em Building Simulation 2013}, vol.~13, pp.~2489--2493, IBPSA, 2013.

\bibitem{schmelas2017savings}
M.~Schmelas, T.~Feldmann, and E.~Bollin, ``Savings through the use of adaptive predictive control of thermo-active building systems (tabs): A case study,'' {\em Applied Energy}, vol.~199, pp.~294--309, 2017.

\bibitem{tanaskovicrobust}
M.~Tanaskovic, D.~Sturzenegger, R.~Smith, and M.~Morari, ``Robust adaptive model predictive building climate control. ifac-papersonline, 50 (1): 1871--1876, 2017,'' in {\em 20th IFAC World Congress}.

\bibitem{lauro2014adaptive}
F.~Lauro, L.~Longobardi, and S.~Panzieri, ``An adaptive distributed predictive control strategy for temperature regulation in a multizone office building,'' in {\em 2014 IEEE international workshop on intelligent energy systems (IWIES)}, pp.~32--37, IEEE, 2014.

\bibitem{maddalena2022experimental}
E.~T. Maddalena, S.~A. Mueller, R.~M. dos Santos, C.~Salzmann, and C.~N. Jones, ``Experimental data-driven model predictive control of a hospital hvac system during regular use,'' {\em Energy and Buildings}, vol.~271, p.~112316, 2022.

\bibitem{bunning2021input}
F.~B{\"u}nning, A.~Schalbetter, A.~Aboudonia, M.~H. de~Badyn, P.~Heer, and J.~Lygeros, ``Input convex neural networks for building mpc,'' in {\em Learning for Dynamics and Control}, pp.~251--262, PMLR, 2021.

\bibitem{bunning2020experimental}
F.~B{\"u}nning, B.~Huber, P.~Heer, A.~Aboudonia, and J.~Lygeros, ``Experimental demonstration of data predictive control for energy optimization and thermal comfort in buildings,'' {\em Energy and Buildings}, vol.~211, p.~109792, 2020.

\bibitem{di2022lessons}
L.~Di~Natale, Y.~Lian, E.~T. Maddalena, J.~Shi, and C.~N. Jones, ``Lessons learned from data-driven building control experiments: Contrasting gaussian process-based mpc, bilevel deepc, and deep reinforcement learning,'' in {\em 2022 IEEE 61st conference on decision and control (CDC)}, pp.~1111--1117, IEEE, 2022.

\bibitem{bunning2022physics}
F.~B{\"u}nning, B.~Huber, A.~Schalbetter, A.~Aboudonia, M.~H. de~Badyn, P.~Heer, R.~S. Smith, and J.~Lygeros, ``Physics-informed linear regression is competitive with two machine learning methods in residential building mpc,'' {\em Applied Energy}, vol.~310, p.~118491, 2022.

\bibitem{aswani2013provably}
A.~Aswani, H.~Gonzalez, S.~S. Sastry, and C.~Tomlin, ``Provably safe and robust learning-based model predictive control,'' {\em Automatica}, vol.~49, no.~5, pp.~1216--1226, 2013.

\bibitem{hofstadter2018energy}
R.~N. Hofstadter, J.~Amaya, and M.~Kozek, ``Energy optimal control of thermal comfort in trams,'' {\em Applied Thermal Engineering}, vol.~143, pp.~812--821, 2018.

\bibitem{shah2021advanced}
A.~Shah, D.~Huang, T.~Huang, and N.~Qin, ``Advanced sliding mode control of heating and ventilation unit in a light rail vehicle,'' {\em Studies in Informatics and Control}, vol.~30, no.~3, pp.~121--130, 2021.

\bibitem{lepore2022neural}
A.~Lepore, B.~Palumbo, and G.~Sposito, ``Neural network based control charting for multiple stream processes with an application to hvac systems in passenger railway vehicles,'' {\em Applied Stochastic Models in Business and Industry}, vol.~38, no.~5, pp.~862--883, 2022.

\bibitem{buonomano2024energy}
A.~Buonomano, C.~Forzano, G.~Giuzio, A.~Palombo, and G.~Russo, ``Energy efficiency and indoor thermal comfort of railway carriages: Development of an innovative passenger-centric-control framework for hvac systems,'' {\em Energy}, vol.~307, p.~132440, 2024.

\bibitem{kumar2010design}
M.~Kumar and I.~Kar, ``Design of model-based optimizing control scheme for an air-conditioning system,'' {\em HVAC\&R Research}, vol.~16, no.~5, pp.~565--597, 2010.

\bibitem{moffat2024transient}
K.~Moffat, F.~D{\"o}rfler, and A.~Chiuso, ``The transient predictor,'' in {\em 2024 IEEE 63rd Conference on Decision and Control (CDC)}, pp.~1871--1876, IEEE, 2024.

\bibitem{gasser2019Mach}
J.~Gasser, {\em Machbarkeitsstudie zu den dezentralen Aspekten des prädiktiven thermischen Lastmanagements}.
\newblock feasibility-study by Swiss Federal Railways SBB, chapter 2 "Modell", 31.7.2019.

\bibitem{favoreel1999spc}
W.~Favoreel, B.~De~Moor, and M.~Gevers, ``Spc: Subspace predictive control,'' {\em IFAC Proceedings Volumes}, vol.~32, no.~2, pp.~4004--4009, 1999.

\bibitem{moffat2025biasOfSubspace}
K.~Moffat, F.~D{\"o}rfler, and A.~Chiuso, ``The bias of subspace-based data-driven predictive control,'' in {\em The 2025 IEEE European Control Conference (ECC)}, IEEE and the European Control Association, 2025.

\bibitem{Lofberg2004}
J.~L{\"{o}}fberg, ``Yalmip : A toolbox for modeling and optimization in matlab,'' in {\em In Proceedings of the CACSD Conference}, (Taipei, Taiwan), 2004.

\bibitem{di2022physically}
L.~Di~Natale, B.~Svetozarevic, P.~Heer, and C.~N. Jones, ``Physically consistent neural networks for building thermal modeling: theory and analysis,'' {\em Applied Energy}, vol.~325, p.~119806, 2022.

\end{thebibliography}

\vfill

\end{document}